\providecommand{\U}[1]{\protect\rule{.1in}{.1in}}
\newtheorem{theorem}{Theorem}[section]
\newtheorem{corollary}[theorem]{Corollary}
\newtheorem{lemma}[theorem]{Lemma}
\newtheorem{proposition}[theorem]{Proposition}
\newtheorem{remark}[theorem]{Remark}
\newenvironment{proof}[1][Proof]{\noindent\textbf{#1.} }{\ \rule{0.5em}{0.5em}}
\begin{document}

\title{A new quantum version of $f$-divergence}
\author{Keiji Matsumoto\\Quantum Computation Group, National Institute of Informatics, \ \\2-1-2 Hitotsubashi, Chiyoda-ku, Tokyo 101-8430, \\e-mail:keiji@nii.ac.jp }
\maketitle

\section*{Abstract}

This paper proposes and studies new quantum version of $f$-divergences, a
class of convex functionals of a pair of probability distributions including
Kullback-Leibler divergence, Renyi-type relative entropy and so on. There are
several quantum versions so far, including the one by Petz
\cite{HiaiMosonyiPetzBeny}. We introduce another quantum version
($\mathrm{D}_{f}^{\max}$, below), defined as the solution to an optimization
problem, or the minimum classical $f$- divergence necessary to generate a
given pair of quantum states. It turns out to be the largest quantum
$f$-divergence. The closed formula of $\mathrm{D}_{f}^{\max}$ is given either
if $f$ is operator convex, or if one of the state is a pure state. Also,
concise representation of $\mathrm{D}_{f}^{\max}$ as a pointwise supremum of
linear functionals is given and used for the clarification of various
properties of the quality.

Using the closed formula of $\mathrm{D}_{f}^{\max}$, we show: Suppose $f$ is
operator convex. Then the\ maximum $f\,$- divergence of the probability
distributions of a measurement under the state $\rho$ and $\sigma$ is strictly
less than $\mathrm{D}_{f}^{\max}(\rho\Vert\sigma)$. This statement may seem
intuitively trivial, but when $f$ is not operator convex, this is not always
true. A counter example is $f(r)=\left\vert 1-r\right\vert $, which
corresponds to total variation distance.

We mostly work on finite dimensional Hilbert space, but some results are
extended to infinite dimensional case.

\section{Introduction}

\label{sec:introduction}

This paper proposes and studies a new quantum version of $f$-divergence:
\[
\mathrm{D}_{f}(p\Vert q):=\sum_{x}q(x)f\left(  \,p(x)/q(x)\,\right)  ,
\]
where $p$ and $q$ are probability distributions. Several important quantities
in information theory and statistics are in this class. For example,
$\mathrm{D}_{r\ln r}$ and $\mathrm{D}_{r^{\alpha}}$ correspond to
Kullback-Leibler divergence and Renyi-type relative entropy, respectively,
which are extensively used in asymptotic analysis of error probability of
decoding, hypothesis test, and so on.

Other $f$-divergences than these have at least one operational meaning. If $f$
is a convex function and satisfies some moderate conditions, $\mathrm{D}%
_{f}(p\Vert q)$ is the optimal gain of a certain Bayes decision problem: for
each $f$, there is a pair of functions $w_{1}$ and $w_{2}$ on decision space
representing a gain of decision $d$ with
\begin{equation}
\mathrm{D}_{f}(p\Vert q)=\sup_{d\left(  \cdot\right)  }\sum_{x}\left(
w_{1}\left(  d(x)\right)  p(x)+w_{2}\left(  d(x)\right)  q(x)\right)  .
\label{sup-w-0}%
\end{equation}
Conversely, for each $\left(  w_{1}\left(  \cdot\right)  ,w_{2}\left(
\cdot\right)  \right)  $, there is a convex function $f$ with this identity.
Also, by (\ref{sup-w-0}) and the celebrated randomization criterion
\cite{Strasser}, there is a Markov map which sends $\left(  p,q\right)  $ to
$\left(  p^{\prime},q^{\prime}\right)  $ iff $\mathrm{D}_{f}(p\Vert
q)\geq\mathrm{D}_{f}\left(  p^{\prime}\Vert q^{\prime}\right)  $ holds for any
convex function $f$ with above mentioned properties.

In quantum information theory, a series of works by Petz (see
\cite{HiaiMosonyiPetzBeny} and references therein) is most impressive, and his
version of quantum divergence have been widely studied and applied. Also,
recent development of theory of quantum Renyi entropy is significant.

In this paper, we introduce another quantum version, the maximal quantum
$f$\thinspace-\thinspace divergence $\mathrm{D}_{f}^{\max}(\rho\Vert\sigma)$.
This quantity is defined as the solution to the following optimization
problem: given a pair of quantum states $\left\{  \rho,\sigma\right\}  $,
consider a (completely) positive trace preserving map $\Gamma$ that sends
probability distributions $\left\{  p,q\right\}  $ to $\left\{  \rho
,\sigma\right\}  $. The triple $\left(  \Gamma,\left\{  p,q\right\}  \right)
$ (\textit{reverse test}, here after), is optimized to minimize $\mathrm{D}%
_{f}(p\Vert q)$, and this infimum is $\mathrm{D}_{f}^{\max}(\rho\Vert\sigma)$.
\ The name comes from the fact that $\mathrm{D}_{f}^{\max}$ is the largest of
the all possible quantum $f$\thinspace-\thinspace divergences.

Some historical remarks are in order. When $f$ is $r\ln r$ and $\sigma$ is
invertible,
\[
\mathrm{D}_{r\ln r}^{\max}(\rho\Vert\sigma)=\mathrm{tr}\,\rho\log\,\rho
^{1/2}\sigma^{-1}\rho^{1/2}.
\]
This RHS quantity had been studied by several authors from operator theoretic
point of view\thinspace\cite{Belavkin}\cite{HammersleyBelavkin}\cite{HiaiPetz}%
. Also, some authors had pointed out this quantity is path dependent
divergence\thinspace\cite{AmariNagaoka} of RLD quantum Fisher\thinspace metric
\cite{Petz}, which plays an important role in quantum statistical estimation
theory\thinspace\cite{Holevo}, along $e$- and $m$- geodesic connecting $\rho$
and $\sigma$\thinspace\cite{Hayashi}\cite{Jencova:03}\cite{Matsumoto:05}.
However, its characterization as the solution to the optimization problem and
the largest quantum version is first pointed out by the present author
\cite{Matsumoto:05}. In \cite{Matsumoto}, the present author studied
$\mathrm{D}_{r^{1/2}}^{\max}$ rather intensively, and briefly treated the case
when $f$ is operator monotone decreasing. Recently, based on an earlier
version of the present paper, \cite{HiaiMosonyi} studied some aspects of
$\mathrm{D}_{f}^{\max}$ .

Below, we summarize our main results. When $f$ is operator convex, a series of
rich results are available. First, we can write down the value of
$\mathrm{D}_{f}^{\max}$ \ and the operation achieving the minimum explicitly:
Suppose $\sigma$ and $\rho$ are%
\[
\rho=\left[
\begin{array}
[c]{cc}%
\rho_{11} & \rho_{12}\\
\rho_{21} & \rho_{22}%
\end{array}
\right]  ,\,\sigma=\left[
\begin{array}
[c]{cc}%
\sigma_{11} & 0\\
0 & 0
\end{array}
\right]  ,
\]
then
\begin{equation}
\mathrm{D}_{f}^{\max}\left(  \rho||\sigma\right)  =\mathrm{tr}\,\sigma
\,f\,\left(  \sigma^{-1/2}\tilde{\rho}\sigma^{-1/2}\right)  +\mathrm{tr}%
\,\left(  \rho-\tilde{\rho}\right)  \lim_{\varepsilon\downarrow0}\varepsilon
f\left(  1/\varepsilon\right)  . \label{Dmax-convex}%
\end{equation}
where $\tilde{\rho}:=\rho_{11}-\rho_{12}\left(  \rho_{22}\right)  ^{-1}%
\rho_{21}$. The first term of the RHS is trace of non-commutative
perspective\thinspace\cite{Ebadian}\cite{Effros} of $\tilde{\rho}$ and
$\sigma$. The operation achieves the minimum is obtained using spectral
decomposition of $\sigma^{-1/2}\tilde{\rho}\sigma^{-1/2}$, and the same
reverse test is optimal for all operator convex function $f$ 's . Uniqueness
of optimal operation modulo trivial redundancy is also shown.

Based on these analysis, we had shown, for example: Suppose $f$ is operator
convex. Then the\ maximum $f\,$- divergence of the probability distributions
of a measurement under the state $\rho$ and $\sigma$ is strictly less than
$\mathrm{D}_{f}^{\max}(\rho\Vert\sigma)$. Thus, once encoded into
non\thinspace-\thinspace commutative quantum states, some amount of classical
$f\,$-divergence is irrecoverably lost. (This statement may seem intuitively
trivial, but when $f$ is not operator convex, this is not always true.)

After the detailed analysis of the case of $f$ is operator convex, we study
the case where such an assumption is not true. One of the motivation is much
of the results in the former case generalizes.

First, when one of the states are a pure state, (\ref{Dmax-convex}) generalize
to all the convex functions, and the optimal reverse test is also the same,
and also unique. Also, $\mathrm{D}_{f}^{\max}$ is strictly larger than
measured $f$\thinspace-\thinspace divergence, unless two states commute.

Next, we analyzed $f(r)=\left\vert 1-r\right\vert $, since this corresponds to
total variation distance, which is quite often used in statistics, information
theory, and so on. Though we failed to obtain the closed formula, the
optimization problem is reduced to quite simple linear semidefinite program.
Using this, we had shown that (\ref{Dmax-convex}) is not true in this case,
and the optimal reverse test is no the same either.

In addition, when $\left\{  \rho,\sigma\right\}  $ satisfies some conditions,
it turns out
\begin{equation}
\mathrm{D}_{\left\vert 1-r\right\vert }^{\max}\left(  \rho||\sigma\right)
=\left\Vert \rho-\sigma\right\Vert _{1}.\label{D=TV2}%
\end{equation}
Since the RHS equals the measured total variation distance, this means the
total variation sometimes does not decrease by embedding into non\thinspace
-\thinspace commutative quantum states. The condition for (\ref{D=TV2}) is not
too restrictive: for example, if $\rho\sigma+\sigma\rho\geq0$, this identity
holds. In the qubit case, the necessary and sufficient condition for
(\ref{D=TV2}) is obtained, and fairly large area of Bloch sphere satisfies
(\ref{D=TV2}).

Besides from these case studies, we had shown the dual expression of
$\mathrm{D}_{f}^{\max}$,%
\begin{equation}
\mathrm{D}_{f}^{\max}(\rho\Vert\sigma)=\sup\left\{  \mathrm{tr}\left(  \,\rho
W_{1}+\sigma W_{2}\right)  \,;rW_{1}+W_{2}\leq f(r)\mathbf{1}\,,r\geq
0\right\}  . \label{Dmax-dual-2}%
\end{equation}
This shows that $\mathrm{D}_{f}^{\max}$ is the pointwise supremum of linear
functionals, thus it is lower\thinspace\ semicontinuous. Thus, $\mathrm{D}%
_{f}^{\max}$ behaves extremely nicely at the edge of the domain. In fact, if
$\left(  \rho_{\varepsilon},\sigma_{\varepsilon}\right)  $ is an arbitrary
line segment connecting $(\rho,\sigma)$ and an interior point of the domain,
$\lim_{\varepsilon\downarrow0}\mathrm{D}_{f}^{\max}\left(  \rho_{\varepsilon
}\Vert\sigma_{\varepsilon}\right)  =\mathrm{D}_{f}^{\max}(\rho\Vert\sigma)$
holds. (\ref{Dmax-dual-2}) is also valid, with certain restrictions, even when
the underlying Hilbert space is separable infinite dimensional space.

Except for the last subsection, we will work on a finite dimensional Hilbert
space $\mathcal{H}$. In most cases, the underlying Hilbert space is not
mentioned unless it is confusing. The space of trace class operators, and
bounded operators on $\mathcal{H}$ is denoted by $\mathcal{B}_{1}\left(
\mathcal{H}\right)  $, and $\mathcal{B}\left(  \mathcal{H}\right)  $,
respectively, and the space of their self\thinspace-\thinspace adjoint
elements are denoted by $\mathcal{B}_{1,sa}\left(  \mathcal{H}\right)  $, and
$\mathcal{B}_{sa}\left(  \mathcal{H}\right)  $. In most cases, specification
of underlying Hilbert space is dropped, thus $\mathcal{B}_{sa}$ in stead of
$\mathcal{B}_{sa}\left(  \mathcal{H}\right)  $, for example. When
$\dim\,\mathcal{H<\infty}$ (thus in most of the paper,) to denote the space of
all linear operators, we use $\mathcal{B}\left(  \mathcal{H}\right)  $.

For each operator $A$, $A^{-1}$ denotes its Moore-Penrose generalized inverse.
Also, for each positive operator $X$, denote by $\mathrm{supp}\,X$ the its
support, and by $\pi_{X}$ the projection onto $\mathrm{supp}\,X$. The
projection onto the space $\mathcal{K}$ is denoted by $\pi_{\mathcal{K}}$.
Orthogonal complement of the projector $\pi$ is denoted by $\pi^{\perp}$. In
this paper, in most part, probability distributions or positive measures are
defined on the finite set $\mathcal{X}$. These are easily identified with
commutative elements of $\mathcal{B}_{sa}\left(  \mathbb{C}^{\left\vert
\mathcal{X}\right\vert }\right)  $. Note the support of the measures $\mu$ is
also denoted by $\mathrm{supp}\,\mu$.

\section{Classical $f$-divergence}

\label{sec:classical}

This section explains the definition and known useful facts about classical
$f$-divergence, and convex analysis.

The definition of $\mathrm{D}_{f}$ in the introduction obviously cannot be
used when $q(x)=0$ for some $x$. Convex analysis supplies useful tools to cope
with such continuity issue. As in \cite{Rockafellar}, we suppose that $h$ is a
map from $\mathbb{R}^{n}$ $\ $to $\mathbb{R\cup}\left\{  \pm\infty\right\}  $.
Instead of saying that $h$ is not defined on a certain set, we say that
$h(r)=\infty$ on that set. The \textit{effective domain} of $h$, denoted by
$\mathrm{dom}\,h$ ,\ is the set of all $r$'s with $h(r)\mathbb{<\infty}$. $h$
is said to be convex iff its \textit{epigraph}, or the set $\mathrm{epi}%
\,h:=\left\{  \left(  r,\lambda\right)  ;\lambda\geq h(r)\right\}  $ is
convex. A convex function $h$ is \textit{proper} iff $h$ is nowhere $-\infty$
and not $\infty$ everywhere, and is \textit{lower semicontinuous} iff the set
$\left\{  r;\lambda\geq h(r)\right\}  $ is closed for any $\lambda$, or
equivalently, iff its epigraph is closed (Theorem\thinspace7.1 of
\cite{Rockafellar}) , or equivalently, $h(\lim_{k\rightarrow\infty}r_{k}%
)\leq\varliminf_{k\rightarrow\infty}h(r_{k})$.\ 

Given a convex function $h$, its \textit{closure} $\mathrm{cl}\,h$ is the
greatest lower semicontinuous (not necessarily finite) function majorized by
$h$. The name comes from the fact that $\mathrm{epi}\,(\mathrm{cl}%
\,h)=\mathrm{cl}\,(\mathrm{epi}\,h)$. $\mathrm{cl}\,h$ coincide with $h$
except perhaps at the relative boundary points of its effective domain. If $h$
is proper and convex, so is $\mathrm{cl}\,h$ (Theorem 7.4, \cite{Rockafellar}).

The following Proposition will be intensively used later.

\begin{proposition}
\label{prop:continuous}(Theorem 10.2,\cite{Rockafellar} ) If $h$ is lower
semicontinuous, proper and convex, it is continuous on any simplex in
$\mathrm{dom}\,h$.
\end{proposition}

From here, \ unless otherwise mentioned, $f$ , which is used to define
$f$-divergence, is supposed to satisfy the following condition.

\begin{description}
\item[(FC)] $f$ is a proper, lower semicontinuous, and convex function with
$\mathrm{dom}\,f\supset(0,\infty)$. Also, $f(0)=0$.
\end{description}

Now we are in the position to define the classical $f$-divergence
$\mathrm{D}_{f}$ between the positive measures $p$ and $q$ over the finite set
$\mathcal{X}$ . It is defined in the following manner, so that the function
$\left(  p,q\right)  \rightarrow\mathrm{D}_{f}\left(  p||q\right)  $ is lower
semicontinuous: Namely,
\[
\mathrm{D}_{f}(p\Vert q):=\sum_{x\in\mathcal{X}}g_{f}\left(  p(x),q(x)\right)
,
\]
where $g_{f}(s,t)$ is the closure of $tf\left(  \frac{s}{t}\right)  $ ( see
p.\thinspace35 and p.67 of \cite{Rockafellar} ),
\begin{equation}
g_{f}(s,t):=\left\{
\begin{array}
[c]{cc}%
tf\left(  s/t\right)  , & \text{if }s\in\mathrm{dom}\,f,t>0\\
\lim_{t\downarrow0}tf\left(  s/t\right)  , & \text{if }s\,\in\mathrm{dom}%
\,f,\,t=0,\\
0, & \text{if }s=t=0,\\
\infty, & \text{otherwise.}%
\end{array}
\right.  \label{def-g}%
\end{equation}

It is easy to check that \
\[
\mathrm{D}_{f}(p\Vert q)=\sum_{x\in\mathrm{supp}\,q}q(x)f\left(  \frac
{p(x)}{q(x)}\right)  +\sum_{x\in\mathcal{X}/\mathrm{supp}\,q}p(x)\lim
_{\varepsilon\downarrow0}\varepsilon\,f\left(  \frac{1}{\varepsilon}\right)
.
\]

\begin{remark}
Though $p$ and $q$ have to be probability for $\mathrm{D}_{f}$ to have
operational meanings, we extend the domain of $\mathrm{D}_{f}$ to pairs of
positive finite measures on finite set for the sake of mathematical convenience.
\end{remark}

Observe also $g_{f}$ is in addition positively homogeneous, or
\[
\forall a\geq0,\,\,g_{f}\left(  as,at\right)  =ag_{f}(s,t).\,
\]
\ Since it is positively homogeneous, proper, lower semicontinuous and convex,
by Corollary 13.5.1 of \cite{Rockafellar}, it is the pointwise supremum of
linear functions,%
\begin{equation}
g_{f}(s,t)=\sup_{\left(  w_{1},w_{2}\right)  \in\mathcal{W}_{f}}w_{1}s+w_{2}t,
\label{g=supW}%
\end{equation}
where the set $\mathcal{W}$ is convex and unbounded from below.

Therefore, \
\begin{equation}
\mathrm{D}_{f}(p\Vert q)=\sup\left\{  \sum_{x\in\mathcal{X}}w_{1}%
(x)p(x)+w_{2}(x)q(x);\left(  w_{1}(x),w_{2}(x)\right)  \in\mathcal{W}%
_{f}\text{ }\right\}  . \label{D=sup-w}%
\end{equation}
This in turn shows, by Corollary 13.5.1 of \cite{Rockafellar}, $\mathrm{D}%
_{f}$ is positively homogeneous, proper, lower semicontinuous and convex.

\begin{remark}
(\ref{D=sup-w}) indicates (\ref{sup-w-0}).To see this, use $\mathcal{W}_{f}$
as a decision space.
\end{remark}

If $f$ satisfies (FC), the function
\begin{equation}
\hat{f}(r):=g_{f}\left(  r,1\right)  \label{f-hat}%
\end{equation}
also satisfies (FC) and $g_{\hat{f}}\left(  t,s\right)  =g_{f}(s,t)$. This
identity implies%
\begin{equation}
\mathrm{D}_{f}\left(  p||q\right)  =\mathrm{D}_{\hat{f}}\left(  q||p\right)  .
\label{Df-Df}%
\end{equation}
Also,
\begin{equation}
\lim_{\varepsilon\downarrow0}\varepsilon f\left(  1/\varepsilon\right)
=\hat{f}(0),\,f(0)=\lim_{\varepsilon\downarrow0}\varepsilon\hat{f}\left(
1/\varepsilon\right)  . \label{f-hat-f}%
\end{equation}
Introduction of $\hat{f}$ often simplifies the argument, allowing to switch
the first and the second variables.

\section{Reverse test and maximal $f$- divergence}

\label{sec:max-f}

In this section we define maximal $f$- divergence $\mathrm{D}_{f}^{\max}$ as
the solution to an operationally defined minimization problem.

A \textit{reverse test} of a pair $\left\{  \rho,\sigma\right\}  $ of positive
definite operators is a triple $\left(  \Gamma,\left\{  p,q\right\}  \right)
$. Here, $\Gamma$ is a trace preserving positive linear map from positive
measures over some a finite set $\mathcal{X}$ (or commutative algebra with
dimension $\left\vert \mathcal{X}\right\vert $) to Hermitian operators, and
$p$ and $q$ are positive measures over $\mathcal{X}$, with
\[
\Gamma\left(  p\right)  =\rho,\,\Gamma\left(  q\right)  =\sigma.
\]
(Note $\Gamma$ is necessarily completely positive.)

For a function $f$ satisfying above (FC), we define \textit{maximal }%
$f$\textit{-divergence}
\begin{equation}
\mathrm{D}_{f}^{\max}(\rho\Vert\sigma)=\inf_{\left(  \Gamma,\left\{
p,q\right\}  \right)  }\mathrm{D}_{f}(p\Vert q), \label{Dmax-def}%
\end{equation}
where the infimum is taken over all the reverse tests. The name comes from the
fact that $\mathrm{D}_{f}^{\max}(\rho\Vert\sigma)$ is the largest quantum
version of $\mathrm{D}_{f}(p\Vert q)$; here, quantum version of $\mathrm{D}%
_{f}(p\Vert q)$ is any $\mathrm{D}_{f}^{Q}(\rho\Vert\sigma)$ such that

\begin{description}
\item[(D1)] $\mathrm{D}_{f}^{Q}\left(  \Lambda\left(  \rho\right)
\Vert\Lambda(\sigma)\right)  \leq\mathrm{D}_{f}^{Q}(\rho\Vert\sigma)$ holds
for any completely positive trace preserving (CPTP) map, any density operators
$\rho$,$\sigma$ on finite dimensional Hilbert spaces.

\item[(D2)] $\mathrm{D}_{f}^{Q}(p\Vert q)=\mathrm{D}_{f}(p\Vert q)$ for any
probability distributions $p$, $q$ over any finite sets.
\end{description}

Here $p$ is identified with $\sum_{x\in\mathcal{X}}p(x)\left\vert
e_{x}\right\rangle \left\langle e_{x}\right\vert $, for example, where
$\left\{  \left\vert e_{x}\right\rangle ;x\in\mathcal{X}\right\}  $ is a CONS.
Choice of a particular CONS is not important, since
\[
\mathrm{D}_{f}^{Q}\left(  U\rho U^{\dagger}\Vert U\sigma U^{\dagger}\right)
=\mathrm{D}_{f}^{Q}(\rho\Vert\sigma)
\]
for any unitary operator $U$ due to (D1).

We also consider the following stronger condition.

\begin{description}
\item[(D1')] $\mathrm{D}_{f}^{Q}\left(  \Lambda\left(  \rho\right)
||\Lambda(\sigma)\right)  \leq\mathrm{D}_{f}^{Q}\left(  \rho||\sigma\right)  $
holds for any trace preserving positive map $\Lambda$, any any density
operators $\rho$,$\sigma$, on finite dimensional Hilbert spaces.
\end{description}

\begin{lemma}
\label{lem:Dmax>DQ}If (FC) is satisfied, $\mathrm{D}_{f}^{\max}$ satisfies
above (D1), (D1') and (D2). Also, if a two point functional $\mathrm{D}%
_{f}^{Q}$ satisfies satisfies both of (D1) and (D2), or both of (D1') and
(D2),
\[
\mathrm{D}_{f}^{Q}(\rho\Vert\sigma)\leq\mathrm{D}_{f}^{\max}(\rho\Vert
\sigma).
\]

\end{lemma}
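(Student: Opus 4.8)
The plan is to split the statement into three essentially independent tasks: verifying (D1)/(D1') for $\mathrm{D}_{f}^{\max}$, verifying (D2) for $\mathrm{D}_{f}^{\max}$, and then establishing maximality. For the monotonicity properties, the key observation is that composing a reverse test with a trace-preserving positive map yields another reverse test. Concretely, given any positive trace-preserving $\Lambda:\mathcal{L}_{A}\rightarrow\mathcal{L}_{B}$ and any reverse test $(\Gamma,\{p,q\})$ of $\{\rho,\sigma\}$, the triplet $(\Lambda\circ\Gamma,\{p,q\})$ is a reverse test of $\{\Lambda(\rho),\Lambda(\sigma)\}$: the map $\Lambda\circ\Gamma$ is again positive and trace preserving, and $(\Lambda\circ\Gamma)(p)=\Lambda(\rho)$, $(\Lambda\circ\Gamma)(q)=\Lambda(\sigma)$. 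Since the feasible set defining $\mathrm{D}_{f}^{\max}(\Lambda(\rho)||\Lambda(\sigma))$ then contains all such images, taking the infimum gives $\mathrm{D}_{f}^{\max}(\Lambda(\rho)||\Lambda(\sigma))\leq\mathrm{D}_{f}^{\max}(\rho||\sigma)$. This proves (D1'), and (D1) follows because every CPTP map is in particular positive and trace preserving.

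For (D2), first I would use the trivial reverse test with $\mathcal{X}$ the index set, $\Gamma$ sending a measure $\mu$ to the diagonal operator $\sum_{x}\mu(x)\,|e_{x}\rangle\langle e_{x}|$, and $\{p,q\}$ the given distributions; this is feasible and yields $\mathrm{D}_{f}^{\max}(p||q)\leq\mathrm{D}_{f}(p||q)$. For the reverse inequality, given an arbitrary reverse test $(\Gamma,\{p',q'\})$ of the diagonal states, I would compose $\Gamma$ with the measurement in the basis $\{e_{x}\}$ to obtain a classical (stochastic) map carrying $p'\mapsto p$ and $q'\mapsto q$, and then invoke the classical data-processing inequality for $f$-divergence to conclude $\mathrm{D}_{f}(p'||q')\geq\mathrm{D}_{f}(p||q)$; taking the infimum gives $\mathrm{D}_{f}^{\max}(p||q)\geq\mathrm{D}_{f}(p||q)$.

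The maximality is the heart of the lemma. Suppose $\mathrm{D}_{f}^{Q}$ satisfies (D1) and (D2) (the argument under (D1') and (D2) is identical). Fix any reverse test $(\Gamma,\{p,q\})$. The crucial structural fact is that $\Gamma$, being a positive map out of a commutative algebra, is automatically completely positive, hence CPTP (as already noted in the text). Identifying the measures $p,q$ with the diagonal states $\tilde{p},\tilde{q}$ and viewing $\Gamma$ as a channel on the corresponding matrix algebra, monotonicity (D1) gives $\mathrm{D}_{f}^{Q}(\rho||\sigma)=\mathrm{D}_{f}^{Q}(\Gamma(\tilde{p})||\Gamma(\tilde{q}))\leq\mathrm{D}_{f}^{Q}(\tilde{p}||\tilde{q})$, and (D2) rewrites the right-hand side as $\mathrm{D}_{f}(p||q)$. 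Taking the infimum over all reverse tests yields $\mathrm{D}_{f}^{Q}(\rho||\sigma)\leq\mathrm{D}_{f}^{\max}(\rho||\sigma)$.

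I expect the main obstacle to be the bookkeeping around unnormalized, possibly singular measures. Because (FC) only requires $f$ to be proper closed convex with $f(0)=0$ and the reverse test uses arbitrary non-negative measures, I must ensure that $\mathrm{D}_{f}(p||q)$ is well defined (valued in $\mathbb{R}\cup\{+\infty\}$) under the recession-function convention (\ref{recession}), and that the classical data-processing inequality used in (D2) holds at this level of generality, including at points where $q(x)=0$. The monotonicity arguments themselves are formally immediate once the reverse-test composition is set up, so the real care lies in these convention-dependent estimates and in justifying the passage between ``maps on measures / commutative algebras'' and ``channels on matrix algebras'' needed to apply (D1) in the maximality step.
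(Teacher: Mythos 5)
Your proposal is correct and follows essentially the same route as the paper: (D1') via composing reverse tests of $\{\rho,\sigma\}$ with $\Lambda$ to obtain reverse tests of $\{\Lambda(\rho),\Lambda(\sigma)\}$, (D2) via the trivial reverse test plus classical data-processing, and maximality by applying (D1)/(D1') and (D2) to an arbitrary reverse test (using that $\Gamma$, as a positive trace-preserving map out of a commutative algebra, is automatically CPTP) and taking the infimum. Your explicit dephasing/measurement step in (D2) and the remarks on the recession convention only make precise what the paper leaves implicit.
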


\begin{proof}
Let $\Lambda$ be a trace preserving positive map. Then,
\begin{align*}
&  \mathrm{D}_{f}^{\max}\left(  \Lambda\left(  \rho\right)  \Vert
\Lambda(\sigma)\right)  \\
&  =\inf_{\left(  \Gamma,\left\{  p,q\right\}  \right)  }\left\{
\mathrm{D}_{f}(p\Vert q);\left(  \Gamma,\left\{  p,q\right\}  \right)  \text{
: a reverse test of }\left\{  \Lambda\left(  \rho\right)  ,\Lambda
(\sigma)\right\}  \right\}  \\
&  \leq\inf_{\left(  \Gamma,\left\{  p,q\right\}  \right)  }\left\{
\mathrm{D}_{f}(p\Vert q);\Gamma=\Gamma^{\prime}\circ\Lambda,\left(
\Gamma^{\prime},\left\{  p,q\right\}  \right)  \text{ : a reverse test of
}\left\{  \rho,\sigma\right\}  \right\}  \\
&  =\mathrm{D}_{f}^{\max}(\rho\Vert\sigma).
\end{align*}
Hence, $\mathrm{D}_{f}^{\max}$ satisfies (D1'), and thus (D1) also. Also,
\begin{align*}
\mathrm{D}_{f}^{\max}(p\Vert q) &  =\inf\left\{  \mathrm{D}_{f}\left(
p^{\prime}\Vert q^{\prime}\right)  ;p=\Gamma\left(  p^{\prime}\right)
,q=\Gamma\left(  q^{\prime}\right)  ,\Gamma\text{: stochastic map}\right\}  \\
&  \geq\inf\left\{  \mathrm{D}_{f}\left(  \Gamma\left(  p^{\prime}\right)
\Vert\Gamma\left(  q^{\prime}\right)  \right)  ;p=\Gamma\left(  p^{\prime
}\right)  ,q=\Gamma\left(  q^{\prime}\right)  ,\Gamma\text{: stochastic
map}\right\}  \\
&  =\mathrm{D}_{f}(p\Vert q).
\end{align*}
Since the opposite inequality is trivial, we have $\mathrm{D}_{f}^{\max
}(p\Vert q)=\mathrm{D}_{f}(p\Vert q)$. Thus, $\mathrm{D}_{f}^{\max}$ satisfies (D2).

Suppose $\mathrm{D}_{f}^{Q}$ satisfies (D1) (or (D1')) and (D2), and let
$\left(  \Gamma,\left\{  p,q\right\}  \right)  $ be a reverse test of
$\left\{  \rho,\sigma\right\}  $. Then,
\[
\mathrm{D}_{f}^{Q}(\rho\Vert\sigma)=\mathrm{D}_{f}^{Q}\left(  \Gamma\left(
p\right)  \Vert\Gamma\left(  q\right)  \right)  \leq\mathrm{D}_{f}^{Q}(p\Vert
q)=\mathrm{D}_{f}(p\Vert q).
\]
Therefore, taking infimum over all the reverse tests of $\left\{  \rho
,\sigma\right\}  $, we have $\mathrm{D}_{f}^{Q}(\rho\Vert\sigma)\leq
\mathrm{D}_{f}^{\max}(\rho\Vert\sigma)$.
\end{proof}

\section{Representations of Reverse Tests}

\label{sec:reverse-test}

\subsection{Upper bound to the size of $\mathcal{X}$}

In defining reverse tests, we had assumed the cardinality of $\mathcal{X}$,
$\ $where $\left\{  p,q\right\}  $ are defined, is finite for mathematical
simplicity. But, this restriction is not essential as long as $\dim
\mathcal{H<\infty}$, since Caratheodory's theorem puts a natural upper bound
to the size of $\mathcal{X}$.  

Denote by $\delta_{x}$ the delta distribution at $x$, and define
\[
r_{x}:=p(x)/q(x),\,\,\text{if }x\in\mathrm{supp}\,q,
\]

Then
\begin{align*}
\sum_{x\in\mathcal{X}}q(x)\Gamma(\delta_{x}) &  =\sigma,\\
\sum_{x\in\mathcal{X}}q(x)r_{x}\Gamma(\delta_{x}) &  =\rho-\sum_{x:q(x)=0}%
p(x)\Gamma(\delta_{x}),\\
\sum_{x\in\mathcal{X}}q(x)f(r_{x}) &  =\mathrm{D}_{f}(p\Vert q)-\hat{f}%
(0)\sum_{x:q(x)=0}p(x).
\end{align*}
Since $\sum_{x\in\mathcal{X}}q(x)<\infty$, by Caratheodory's theorem, there is
a positive finite measure $\overline{q}$ such that $\sum_{x\in\mathcal{X}%
}\overline{q}(x)=\sum_{x\in\mathcal{X}}q(x)$,
\begin{align*}
\sum_{x\in\mathcal{X}}\overline{q}(x)\Gamma(\delta_{x}) &  =\sum
_{x\in\mathcal{X}}q(x)\Gamma(\delta_{x}),\\
\sum_{x\in\mathcal{X}}\overline{q}(x)r_{x}\Gamma(\delta_{x}) &  =\sum
_{x\in\mathcal{X}}q(x)r_{x}\Gamma(\delta_{x}),\\
\sum_{x\in\mathcal{X}}\overline{q}(x)f(r_{x}) &  =\sum_{x\in\mathcal{X}%
}q(x)f(r_{x}),
\end{align*}
and $\mathrm{supp}\,\overline{q}\subset\mathrm{supp}\,q$, $\left\vert
\mathrm{supp}\,\overline{q}\right\vert \leq\left(  \dim\mathcal{H}\right)
^{2}+\left(  \dim\mathcal{H}\right)  ^{2}+1+1$. Thus, defining $\overline
{\mathcal{X}}$, $\overline{p}$, and $\overline{\Gamma}$ by
\begin{align*}
\overline{\mathcal{X}} &  :=\mathrm{supp}\,\overline{q}\cup\{x_{0}\},\\
\overline{p}(x) &  :=\left\{
\begin{array}
[c]{cc}%
r_{x}q(x), & \text{if }x\in\mathrm{supp}\,\overline{q},\\
\sum_{x:q(x)=0}p(x), & \text{if }x=x_{0},
\end{array}
\right.  \,\\
\overline{\Gamma}(\delta_{x}) &  :=\left\{
\begin{array}
[c]{cc}%
\Gamma(\delta_{x}), & \text{if }x\in\mathrm{supp}\,\overline{q},\\
\sum_{x:q(x)=0}\Gamma(\delta_{x}), & \text{if }x=x_{0}%
\end{array}
\right.
\end{align*}
we have:

\begin{lemma}
\label{lem:caratheodory} To each given test $\left(  \Gamma,\left\{
p,q\right\}  \right)  $ of $\left\{  \rho,\sigma\right\}  $, there is a
reverse test $\left(  \overline{\Gamma},\left\{  \overline{p},\overline
{q}\right\}  \right)  $ such that (i) $\mathrm{D}_{f}(p\Vert q)=\mathrm{D}%
_{f}(\overline{p}\Vert\overline{q})$ and (ii) $\overline{p}$ and $\overline
{q}$ are defined over the set $\overline{\mathcal{X}}$ with $\left\vert
\overline{\mathcal{X}}\right\vert \leq2\left(  \dim\mathcal{H}\right)  ^{2}%
+3$. Also, $\overline{\mathcal{X}}\mathcal{=}\mathrm{supp}\,\overline{q}%
\cup\{x_{0}\}$. 
\end{lemma}

\subsection{A representation}

Without loss of generality, we suppose $\mathcal{X=}\mathrm{supp}%
\,q\cup\{x_{0}\}$, and define
\[
S_{x}:=\left\{
\begin{array}
[c]{cc}%
q(x)\Gamma(\delta_{x}), & \text{if }x\neq x_{0},,\\
p(x_{0})\Gamma(\delta_{0}), & \text{if }x=x_{0}.
\end{array}
\right.
\]
Then it should satisfy
\begin{equation}
\sum_{x\in\mathcal{X}\backslash\{x_{0}\}}r_{x}S_{x}+S_{x_{0}}=\rho
,\,\,\sum_{x\in\mathcal{X}\backslash\{x_{0}\}}S_{x}=\sigma.\label{S-r}%
\end{equation}
Conversely, to each such $\left\{  S_{x},r_{x};x\in\mathcal{X}\right\}  $,
there corresponds a reverse test with $\Gamma(\delta_{x})=\frac{1}%
{\mathrm{tr}\,S_{x}}S_{x}$ and%
\[
\left\{  p(x),\,q(x)\right\}  =\left\{
\begin{array}
[c]{cc}%
\left\{  r_{x}\mathrm{tr}\,S_{x},\mathrm{tr}\,S_{x}\right\}  , & \text{if
}x\neq x_{0},\\
\left\{  \mathrm{tr}\,S_{x_{0}},0\right\}  , & \text{if }x=x_{0}%
\text{\thinspace.}%
\end{array}
\right.  \,\,
\]
So $\left\{  S_{x},r_{x};x\in\mathcal{X}\right\}  $ is a bijective
representation of a reverse test. By this representation, $\mathrm{D}%
_{f}^{\max}$ is the solution to the optimization problem
\begin{equation}
\mathrm{D}_{f}^{\max}(\rho\Vert\sigma)=\inf\left\{  \sum_{x\in\mathcal{X}%
\backslash\{x_{0}\}}f(r_{x})\mathrm{tr}\,S_{x}+\hat{f}(0)\mathrm{tr}%
\,S_{x_{0}};S_{x}\text{ with (\ref{S-r})}\right\}  .\label{Dmax-S}%
\end{equation}

Observe this optimization can be done in the two stages; Fixing $S_{x_{0}}$,
or equivalently
\begin{equation}
\rho_{\ast}:=\rho-S_{x_{0}}=\sum_{x\in\mathcal{X}\backslash\{x_{0}\}}%
r_{x}\mathrm{tr}\,S_{x}\leq\rho,\label{rho*}%
\end{equation}
optimize $\left\{  S_{x};x\in\mathcal{X}\backslash\{x_{0}\}\right\}  $ to
minimize
\[
\sum_{x\in\mathcal{X}\backslash\{x_{0}\}}f(r_{x})\mathrm{tr}\,S_{x}%
=\sum_{x:\mathrm{supp}\,q(x)}q(x)f\left(  \frac{p(x)}{q(x)}\right)
=\mathrm{D}_{f}(p\Vert q),
\]
where $\tilde{p}$ is restriction of $p$ to $\mathrm{supp}\,q$. Since $\left(
\Gamma,\left\{  \tilde{p},q\right\}  \right)  $ is a reverse test of $\left\{
\rho_{\ast},\sigma\right\}  $, the minimum of $\mathrm{D}_{f}(\tilde{p}\Vert
q)$ equals $\mathrm{D}_{f}^{\max}\left(  \rho_{\ast}\Vert\sigma\right)  .$

After this is done, we optimize $\rho_{\ast}$to minimize
\[
\mathrm{D}_{f}^{\max}\left(  \rho_{\ast}\Vert\sigma\right)  +\hat
{f}(0)\mathrm{tr}\,S_{x_{0}}=\mathrm{D}_{f}^{\max}\left(  \rho_{\ast}%
\Vert\sigma\right)  +\hat{f}(0)\mathrm{tr}\,\left(  \rho-\rho_{\ast}\right)  .
\]
Note that $\mathrm{supp}\,\rho_{\ast}\subset\mathrm{supp}\,\sigma\,\ $holds by
(\ref{S-r}) and (\ref{rho*}), and $0\leq\rho_{\ast}\leq\rho$ by its definition
(\ref{rho*}). Thus,
\begin{equation}
\mathrm{D}_{f}^{\max}(\rho\Vert\sigma)=\inf\left\{  \mathrm{D}_{f}^{\max
}\left(  \rho_{\ast}\Vert\sigma\right)  +\hat{f}(0)\mathrm{tr}\,\left(
\rho-\rho_{\ast}\right)  ;0\leq\rho_{\ast}\leq\rho,\mathrm{supp}\,\rho_{\ast
}\subset\mathrm{supp}\,\sigma\right\}  \label{Dmax=0}%
\end{equation}

Here, introduce the operator \ {}%

\begin{equation}
\tilde{\rho}:=\rho_{11}-\rho_{12}\rho_{22}{}^{-1}\rho_{21}, \label{rho-tilde}%
\end{equation}
where $\ $%
\[
\rho=\left[
\begin{array}
[c]{cc}%
\rho_{11} & \rho_{12}\\
\rho_{21} & \rho_{22}%
\end{array}
\right]  ,\,\sigma=\left[
\begin{array}
[c]{cc}%
\sigma_{11} & 0\\
0 & 0
\end{array}
\right]  .
\]

\begin{lemma}
\label{lem:rho-tilde}Suppose $\rho_{\ast}\geq0$ is supported on $\mathrm{supp}%
\,\sigma$ and $\rho_{\ast}\leq\rho$. Then, $\tilde{\rho}\geq\rho_{\ast}.$
Also, $0\leq\tilde{\rho}\leq\rho$ and $\mathrm{supp}\,\tilde{\rho}$
$\subset\mathrm{supp}\,\sigma$.
\end{lemma}

\begin{proof}
By Proposition\thinspace\ref{prop:block-positive}, $\tilde{\rho}\geq0$. (in
fact, $\tilde{\rho}^{-1}=\pi_{\sigma}\rho^{-1}\pi_{\sigma}$.) \ $\tilde{\rho
}\leq\rho$ and $\mathrm{supp}\,\tilde{\rho}$ $\subset\mathrm{supp}\,\sigma$
are obvious by definition.

Since $\rho_{\ast}\leq\rho$,
\[
\left[
\begin{array}
[c]{cc}%
\rho_{11}-\rho_{\ast} & \rho_{12}\\
\rho_{21} & \rho_{22}%
\end{array}
\right]  \geq0,
\]
Therefore, by Proposition\thinspace\ref{prop:block-positive}, we should have
$\rho_{11}-\rho_{\ast}\geq\rho_{12}\rho_{22}^{-1}\rho_{21}$, or equivalently,
\[
\tilde{\rho}=\rho_{11}-\rho_{12}\rho_{22}^{-1}\rho_{21}\geq\rho_{\ast}.
\]

\end{proof}

By Lemma\thinspace\ref{lem:rho-tilde}, (\ref{Dmax=0}) can be rewritten as
follows:
\begin{align}
\mathrm{D}_{f}^{\max}(\rho\Vert\sigma)  &  =\inf\left\{  \mathrm{D}_{f}^{\max
}\left(  \rho_{\ast}\Vert\sigma\right)  +\hat{f}(0)\mathrm{tr}\,\left(
\rho-\rho_{\ast}\right)  \right\} \nonumber\\
&  =\inf\,\left\{  \mathrm{D}_{f}^{\max}\left(  \rho_{\ast}\Vert\sigma\right)
+\hat{f}(0)\mathrm{tr}\,\left(  \tilde{\rho}-\rho_{\ast}\right)  \right\}
+\hat{f}(0)\mathrm{tr}\,\left(  \rho-\tilde{\rho}\right) \nonumber\\
&  =\mathrm{D}_{f}^{\max}(\tilde{\rho}\Vert\sigma)+\hat{f}(0)\mathrm{tr}%
\,\left(  \rho-\tilde{\rho}\right)  , \label{Dmax=}%
\end{align}
where $\rho_{\ast}$ moves all the operators with $\rho\geq\rho_{\ast}\geq0$
and $\mathrm{supp}\,\rho_{\ast}\subset\mathrm{supp}\,\sigma$, or equivalently,
$\tilde{\rho}\geq\rho_{\ast}\geq0$ and $\mathrm{supp}\,\rho_{\ast}%
\subset\mathrm{supp}\,\sigma$.

\subsection{Representation and Radon-Nikodym derivative}

To list all the reverse tests, commutative Radon-Nikodym derivative is useful.
Given $\rho_{\ast}\geq0$ with $\mathrm{sppp\,}\rho_{\ast}\subset
\mathrm{sppp\,}\sigma$, the commutative Radon-Nikodym derivative with respect
to $\sigma$ is defined by
\begin{equation}
d\left(  \rho_{\ast},\sigma\right)  :=\sigma^{-1/2}\rho_{\ast}\sigma
^{-1/2}.\label{d-def}%
\end{equation}
Suppose $\rho_{\ast}\leq\rho$ and let $\left\{  M_{x}\right\}  $ be a
resolution of identity into positive operators with
\[
d\left(  \rho_{\ast},\sigma\right)  =\sum_{x\in\mathcal{X}\backslash\{x_{0}%
\}}r_{x}M_{x},\,\sum_{x\in\mathcal{X}}M_{x}=\mathbf{1}.\,
\]
Then
\[
S_{x}=\left\{
\begin{array}
[c]{cc}%
\sigma^{1/2}M_{x}\sigma^{1/2}, & \,\ \text{if }x\neq x_{0},\\
\frac{1}{\mathrm{tr}\left(  \rho-\rho_{\ast}\right)  \,}\left(  \rho
-\rho_{\ast}\right)  , & \text{if }x=x_{0}.
\end{array}
\right.
\]
Therefore,
\begin{equation}
\left\{  q(x),p(x)\right\}  =\left\{
\begin{array}
[c]{cc}%
\left\{  \mathrm{tr}\,\sigma M_{x},r_{x}q_{x}\right\}  , & \text{if }x\neq
x_{0},\\
\left\{  0,\mathrm{tr}\,\left(  \rho-\rho_{\ast}\right)  \right\}  , &
\text{if }x=x_{0},
\end{array}
\right.  \label{reverse}%
\end{equation}
and%
\begin{equation}
\Gamma(\delta_{x}):=\left\{
\begin{array}
[c]{cc}%
\frac{1}{q(x)}\sigma^{1/2}M_{x}\sigma^{1/2},\, & \text{if }x\neq x_{0},\\
\frac{1}{\mathrm{tr}\,\left(  \rho-\rho_{\ast}\right)  }\left(  \rho
-\rho_{\ast}\right)  , & \text{if }x=x_{0}.
\end{array}
\right.  \,\label{reverse-2}%
\end{equation}
Thus, $\left\{  M_{x},r_{x}\right\}  $ and $\rho_{\ast}$ specifies a reverse test.

When $M_{x}$'s are projectors and $\rho_{\ast}=\tilde{\rho}$, we say the
corresponding reverse test is \textit{minimal}.\thinspace The minimal reverse
test turns out to be optimal under certain natural conditions (namely, the
condition (F) in Section\thinspace\ref{sec:max-f}) on $f$.

\section{Properties of $\mathrm{D}_{f}^{\max}$}

\label{sec:property}

\begin{theorem}
\label{th:property}When $f$ satisfies (FC), $\mathrm{D}_{f}^{\max}$ has the
following properties.

(i) $\mathrm{D}_{f}^{\max}$ is jointly convex: if $\rho=\sum_{i}c_{i}\rho_{i}$
, $\sigma=\sum_{i}c_{i}\sigma_{i}$, $\sum_{i}c_{i}=1$ ($c_{i}\geq0$),
\begin{equation}
\mathrm{D}_{f}^{\max}(\rho\Vert\sigma)\leq\sum_{i}c_{i}\mathrm{D}_{f}^{\max
}\left(  \rho_{i}\Vert\sigma_{i}\right)  , \label{D-convex}%
\end{equation}

(ii) If $f(0)=0$ in addition, it is monotone decreasing in the second
argument:
\begin{equation}
\mathrm{D}_{f}^{\max}\left(  \rho\Vert X\right)  \leq\mathrm{D}_{f}^{\max
}(\rho\Vert\sigma),\,\,\,X\geq\sigma\label{Dmax-monotone}%
\end{equation}

(iii)$\mathrm{D}_{f}^{\max}$ is positively homogeneous.
\begin{equation}
\mathrm{D}_{f}^{\max}\left(  c\rho\Vert c\sigma\right)  =c\mathrm{D}_{f}%
^{\max}(\rho\Vert\sigma),\,c\geq0. \label{Dmax-homogeneous}%
\end{equation}
In particular,
\begin{equation}
\mathrm{D}_{f}^{\max}\left(  0\Vert0\right)  =0. \label{Dmax00}%
\end{equation}

(iv) Direct sum property:%
\begin{equation}
\mathrm{D}_{f}^{\max}\left(  \rho_{0}\oplus\rho_{1}\Vert\sigma_{0}\oplus
\rho_{1}\right)  =\mathrm{D}_{f}^{\max}\left(  \rho_{0}\Vert\sigma_{0}\right)
+\mathrm{D}_{f}^{\max}\left(  \rho_{1}\Vert\sigma_{1}\right)  ,
\label{Dmax-sum}%
\end{equation}
where $\,\rho_{i}$,$\,\sigma_{i}$ are supported on $\mathcal{H}_{i}$
($i=0,1$), and $\mathcal{H}_{0}\perp\mathcal{H}_{1}$.
\end{theorem}

\begin{proof}
(i): let $\left(  \Gamma_{i},\left\{  p_{i},q_{i}\right\}  \right)  $ be a
reverse tests of $\left\{  \rho_{i},\sigma_{i}\right\}  $, where $p_{i},q_{i}$
are positive measures over the finite set $\mathcal{X}_{i}$. Then `mixture' of
these reverse tests with probability $c_{i}$, compose a reverse test $\left(
\Gamma,\left\{  p,q\right\}  \right)  $\ of $\left\{  \rho,\sigma\right\}  $:
let $\mathcal{X}=\bigcup_{i}\mathcal{X}_{i}$, and define
\[
p_{i}(x):=c_{i}p_{0},q_{i}(x):=c_{i}q_{i}(x),\text{\thinspace}\Gamma
(\delta_{x})=\Gamma_{i}(\delta_{x}),\,\left(  x\in\mathcal{X}_{i}\right)  .
\]
Then,
\[
\mathrm{D}_{f}\left(  p\Vert q\right)  =\sum_{i}c_{i}\mathrm{D}_{f}\left(
p_{i}\Vert q_{i}\right)  .
\]
Therefore, minimizing over all the reverse tests of $\left\{  \rho
,\sigma\right\}  $, we obtain (\ref{D-convex}).

(ii): let $X^{\prime}:=X-\sigma\geq0$, and define $\mathcal{X}_{1}%
=\mathrm{supp}q\cup$ $\mathrm{supp}p$ and $\mathcal{X}_{2}\cap\mathcal{X}%
_{1}=\emptyset$. Then
\begin{align*}
&  \mathrm{D}_{f}^{\max}\left(  \rho\Vert X\right)  =\mathrm{D}_{f}^{\max
}\left(  \rho\Vert\sigma+X^{\prime}\right) \\
&  \leq\inf\left\{  \mathrm{D}_{f}\left(  p\Vert q+q^{\prime}\right)
\,;\Gamma\left(  p\right)  =\rho,\Gamma\left(  q\right)  =\sigma,\Gamma\left(
q^{\prime}\right)  =X,\,\mathrm{supp\,}q^{\prime}=\mathcal{X}_{2}\right\} \\
&  =\inf\left\{  \mathrm{D}_{f}(p\Vert q)\,;\Gamma\left(  p\right)
=\rho,\Gamma\left(  q\right)  =\sigma,\Gamma\left(  q^{\prime}\right)
=X,\,\mathrm{supp\,}q^{\prime}=\mathcal{X}_{2}\right\} \\
&  =\mathrm{D}_{f}^{\max}(\rho\Vert\sigma),
\end{align*}
where the identity in the third line is due to: since $\mathcal{X}_{2}%
\cap\mathcal{X}_{1}=\emptyset$ and $f(0)=0$,
\begin{align*}
\sum_{x\in\mathcal{X}_{1}\cup\mathcal{X}_{2}}g_{f}\left(  p(x),q(x)+q^{\prime
}(x)\right)   &  =\sum_{x\in\mathcal{X}_{1}}g_{f}\left(  p(x),q(x)\right)
+\sum_{x\in\mathcal{X}_{2}}g_{f}\left(  0,q^{\prime}(x)\right) \\
&  =\sum_{x\in\mathcal{X}_{1}}g_{f}\left(  p(x),q(x)\right)
\end{align*}

(iii): Let $c>0$. Then to each reverse test $\left(  \Gamma,\left\{
p,q\right\}  \right)  $ of $\left\{  \rho,\sigma\right\}  $, corresponds the
reverse test $\left(  \Gamma,\left\{  cp,cq\right\}  \right)  $ test of
$\left\{  c\rho,c\sigma\right\}  $, and vice versa. Hence, due to the fact
that $\mathrm{D}_{f}$ positively homogeneous, we have the identity. When
$c=0$, we only have to show the LHS is $0$. In fact, if $\left(
\Gamma,\left\{  p,q\right\}  \right)  $ is an arbitrary reverse test of
$\left\{  0,0\right\}  $, $\mathrm{supp}\,p$ and $\mathrm{supp}\,q$ are empty,
and $\mathrm{D}_{f}(p\Vert q)=0$. Thus $\mathrm{D}_{f}^{\max}\left(
0\Vert0\right)  =0\,$.

(iv): "$\leq$" is trivial. Thus, we show "$\geq$". Let $\left(  \Gamma
,\left\{  p,q\right\}  \right)  $ be a reverse test of $\left\{  \rho
,\sigma\right\}  =\left\{  \rho_{0}\oplus\rho_{1},\sigma_{0}\oplus\sigma
_{1}\right\}  $, and define
\begin{align*}
\Gamma_{i}(\delta_{x})  &  :=\frac{1}{\mathrm{tr}\,\pi_{\mathcal{H}_{i}}%
\Gamma(\delta_{x})}\pi_{\mathcal{H}_{i}}\Gamma(\delta_{x})\pi_{\mathcal{H}%
_{i}},\\
p_{i}(x)  &  :=p(x)\mathrm{tr}\,\pi_{\mathcal{H}_{i}}\Gamma(\delta
_{x}),\,\,q_{i}(x):=p(x)\mathrm{tr}\,\pi_{\mathcal{H}_{i}}\Gamma(\delta_{x}).
\end{align*}
Then $\left(  \Gamma_{i},\left\{  p_{i},q_{i}\right\}  \right)  $ is a reverse
test of $\left\{  \rho_{i},\sigma_{i}\right\}  $ ($i=0,1$). Also, since
$g_{f}$ is positively homogeneous,
\begin{align*}
&  \mathrm{D}_{f}\left(  p_{0}\Vert q_{0}\right)  +\mathrm{D}_{f}\left(
p_{1}\Vert q_{1}\right) \\
&  =\sum_{i=0,1}\sum_{x\in\mathcal{X}}g_{f}\left(  p(x)\mathrm{tr}%
\,\pi_{\mathcal{H}_{i}}\Gamma(\delta_{x}),q(x)\mathrm{tr}\,\pi_{\mathcal{H}%
_{i}}\Gamma(\delta_{x})\right)  =\sum_{x\in\mathcal{X}}\sum_{i=0,1}%
\mathrm{tr}\,\pi_{\mathcal{H}_{i}}\Gamma(\delta_{x})g_{f}\left(
p(x),q(x)\right) \\
&  =\sum_{x\in\mathcal{X}}g_{f}\left(  p(x),q(x)\right)  =\mathrm{D}%
_{f}(p\Vert q).
\end{align*}
Thus,
\[
\inf\mathrm{D}_{f}(p\Vert q)=\inf\left\{  \mathrm{D}_{f}\left(  p_{0}\Vert
q_{0}\right)  +\mathrm{D}_{f}\left(  p_{1}\Vert q_{1}\right)  \right\}
\geq\inf\mathrm{D}_{f}\left(  p_{0}\Vert q_{0}\right)  +\inf\mathrm{D}%
_{f}\left(  p_{1}\Vert q_{1}\right)  ,
\]
which leads to the asserted inequality. After all, we have (\ref{Dmax-sum}).
\end{proof}

\begin{lemma}
\label{lem:proper}Suppose $f$ satisfies (FC). Then the convex function
$(\rho,\sigma)\rightarrow\mathrm{D}_{f}^{\max}\left(  \rho||\sigma\right)  $
is proper. Thus, it is nowhere $-\infty$.
\end{lemma}

\begin{proof}
An improper convex function is necessarily infinite except perhaps at relative
boundary points of its effective domain (Theorem 7.2 of \cite{Rockafellar}).
But \
\[
\mathrm{D}_{f}^{\max}\left(  p\Vert p\right)  =\mathrm{D}_{f}\left(  p\Vert
p\right)  =\sum_{x\in\mathcal{X}}p(x)f(1)
\]
is finite. Thus $\mathrm{D}_{f}^{\max}$ cannot be improper.
\end{proof}

\begin{theorem}
\label{th:Df-finite}Suppose $f$ satisfies (FC). Then $\mathrm{D}_{f}^{\max
}\left(  \rho||\sigma\right)  <\infty$ only in the following four cases.

(i) $\hat{f}(0)<\infty$ and $f(0)<\infty$;

(ii)$\hat{f}(0)<\infty$, $f(0)=\infty$, and $\mathrm{supp}\,\rho
\supset\mathrm{supp}\,\sigma$;

(iii) $\hat{f}(0)=\infty$ , $f(0)<\infty$, and $\mathrm{supp}\,\rho
\subset\mathrm{supp}\,\sigma$;

(iv) $\hat{f}(0)=\infty$ , $f(0)=\infty$, and $\mathrm{supp}\,\rho
=\mathrm{supp}\,\sigma$.
\end{theorem}

\begin{proof}
In all the cases, if $\left(  \Gamma,\left\{  p,q\right\}  \right)  $ is the
minimal reverse test of $\left\{  \rho,\sigma\right\}  $, $\mathrm{D}%
_{f}\left(  p||q\right)  <\infty$. Thus below we show $\mathrm{D}_{f}^{\max
}\left(  \rho||\sigma\right)  =\infty$ in the case where these conditions are
not true.

Suppose $\hat{f}(0)=\infty$ and $\mathrm{supp}\,\rho\not \subset
\mathrm{supp}\,\sigma$. Then by (\ref{Dmax=}) $\mathrm{D}_{f}^{\max}(\rho
\Vert\sigma)=\infty$, since
\[
\mathrm{D}_{f}^{\max}(\tilde{\rho}\Vert\sigma)\geq\mathrm{D}_{ar+b}^{\max
}(\tilde{\rho}\Vert\sigma)=a\mathrm{tr}\,\tilde{\rho}+b\mathrm{tr}%
\,\sigma>-\infty,
\]
where $a$, $b$ is chosen so that $f(r)\geq ar+b$, $r\geq0$. Suppose
$f(0)=\infty$ and $\mathrm{supp}\,\rho\not \supset \mathrm{supp}\,\sigma$.
Then $\mathrm{D}_{f}^{\max}(\rho\Vert\sigma)=\infty$ is concluded by replacing
$f$ by $\hat{f}$ in the above argument.
\end{proof}

\section{When $f$ is operator convex}

\label{sec:operator-convex}

\subsection{Closed formula}

In this section, we suppose that $f$ is operator convex and $f(0)=0$ in
addition to satisfying (FC):

\begin{description}
\item[(F)] $f$ is proper, lower semicontinuous, and operator convex. In
addition, $\mathrm{dom}$\thinspace$f(x)=$ $[0,\infty)$ and $f(0)=0$.
\end{description}

If this is true and $\hat{f}(0)<\infty$, by Proposition\thinspace
\ref{prop:lowner},
\begin{equation}
f(r)=\hat{f}(0)r+f_{0}(r), \label{f=ay+h-2}%
\end{equation}
where $f_{0}(r)$ satisfies (F) and is operator monotone decreasing.

When $\mathrm{supp\,}\sigma\supset\mathrm{supp\,}\rho$, by the correspondence
(\ref{reverse}) and (\ref{reverse-2}),
\begin{equation}
\mathrm{D}_{f}^{\max}(\rho\Vert\sigma)=\inf_{\left\{  M_{x}\right\}  ,\left\{
r_{x}\right\}  _{\ast}}\left\{  \sum_{x\in\mathcal{X}}f\left(  r_{x}\right)
\mathrm{tr}\,\sigma M_{x}\,\,;\,\,\sum_{x\in\mathcal{X}}r_{x}M_{x}%
=d(\rho,\sigma),\sum_{x\in\mathcal{X}}M_{x}=\mathbf{1}\,\right\}
\label{Df-max-r}%
\end{equation}
Here we use Naimark extension. Denoting the extended space by $\mathcal{H}%
^{\prime}$, and letting $V$ be an isometry from $\mathcal{H}$ (,where $\rho$,
$\sigma$, etc. are living in) into $\mathcal{H}^{\prime}$, there is a tuple of
mutually orthogonal projectors $\left\{  E_{x}\right\}  $ in $\mathcal{H}%
^{\prime}$ with$VE_{x}V^{\dagger}=M_{x}$. Therefore,
\begin{align*}
\sum_{x\in\mathcal{X}}f\left(  r_{x}\right)  \mathrm{tr}\,\sigma M_{x}  &
=\,\mathrm{tr}\,\sigma Vf\left(  \sum_{x\in\mathcal{X}}r_{x}E_{x}\right)
V^{\dagger}\\
&  \geq\mathrm{tr}\,\sigma f\left(  \sum_{x\in\mathcal{X}}r_{x}VE_{x}%
V^{\dagger}\right)  =\mathrm{tr}\,\sigma f\left(  \sum_{x\in\mathcal{X}}%
r_{x}M_{x}\right)  =\mathrm{tr}\,\sigma f\left(  d(\rho,\sigma)\right)  ,
\end{align*}
where the inequality in the second line is by Jensen's inequality,
Proposition\thinspace\ref{prop:jensen} (Note $X\rightarrow VXV^{\dagger}$ is a
positive unital map into $\mathcal{B}\left(  \mathcal{H}\right)  $). The
identity is true if $M_{x}$'s are mutually orthogonal projectors and
$\mathcal{H}^{\prime}=\mathcal{H}$, i.e., if the reverse test is minimal.
Thus,
\[
\mathrm{D}_{f}^{\max}(\rho\Vert\sigma)=\mathrm{tr}\,\sigma f\left(
d(\rho,\sigma)\right)
\]
and the identity is achieved by the minimal test.

Next, suppose \ $\mathrm{supp\,}\sigma\not \supset \mathrm{supp\,}\rho$. If
$\hat{f}(0)=\infty$, by Theorem\thinspace\ref{th:Df-finite}, $\mathrm{D}%
_{f}^{\max}(\rho\Vert\sigma)=\infty$ . If $\hat{f}(0)<\infty$, we can apply
(\ref{Dmax=}). After all:

\begin{theorem}%
\begin{align}
\mathrm{D}_{f}^{\max}\left(  \rho||\sigma\right)   &  =\mathrm{tr}%
\,\sigma\,f\,\left(  d(\tilde{\rho},\sigma)\right)  +\hat{f}(0)\mathrm{tr}%
\,\left(  \rho-\,\tilde{\rho}\right) \label{Dmax-formula}\\
&  =\mathrm{tr}\,\sigma\,f\,\left(  \sigma^{-1}\tilde{\rho}\right)  +\hat
{f}(0)\mathrm{tr}\,\left(  \rho-\,\tilde{\rho}\right) \nonumber
\end{align}
holds if (F) is true. (If $\hat{f}(0)=\infty$, both ends are $\infty$ and the
identity holds.) The minimum is achieved by the minimal reverse test of
$\left\{  \rho,\sigma\right\}  $.
\end{theorem}

\subsection{Examples}

\label{subsec:examples}

Throughout this subsection, we suppose $\mathrm{supp}\,\sigma\supset
\mathrm{supp}\,\rho$. With $f_{\mathrm{KL}}(r):=r\log r,$
\begin{align*}
\mathrm{D}_{f_{\mathrm{KL}}}^{\max}\left(  \rho||\sigma\right)   &
=\mathrm{tr}\,\sigma\sigma^{-1}\rho\left(  \log\,\sigma^{-1}\rho\right) \\
&  =\mathrm{tr}\,\rho\log\,\sigma^{-1}\rho\\
&  =\mathrm{tr}\,\,\rho\log\,\rho^{1/2}\sigma^{-1}\rho^{1/2}.
\end{align*}
This quantity, corresponding to Kullback-Leibler divergence, had been studied
by various authors \cite{Belavkin}\cite{HammersleyBelavkin}\cite{HiaiPetz}%
\cite{Hayashi}\cite{Jencova:03}. The relation to the reverse test problem is
first pointed out by \cite{Matsumoto:05}.

Define $f_{\alpha}(r):=\left(  \pm\right)  r^{\alpha}$, where the sign is
chosen so that the function is convex on the positive half\thinspace
-\thinspace line. This is operator convex if $\alpha\in\left[  -1,1\right]
/\left\{  0\right\}  $, and \
\[
\mathrm{D}_{f_{\alpha}}^{\max}\left(  \rho||\sigma\right)  =\mathrm{tr}%
\,\sigma f_{\alpha}\left(  \sigma^{-1}\rho\right)  .
\]
One can check the following identity, which is a special case of
$\mathrm{D}_{f}^{\max}=\mathrm{D}_{\hat{f}}^{\max}$:
\[
\mathrm{D}_{f_{\alpha}}^{\max}\left(  \rho||\sigma\right)  =\mathrm{D}%
_{f_{1-\alpha}}^{\max}\left(  \sigma||\rho\right)  .
\]

\subsection{Non-commutative perspective}

\label{sec:operator-ineq}

When (F) is true, the following operator valued quantity $g_{f}(\rho,\sigma)$,
called \textit{non-commutative perspective} \cite{Ebadian}\cite{Effros},
satisfies $\mathrm{tr}\,g_{f}(\rho,\sigma)=\mathrm{D}_{f}^{\max}(\rho
\Vert\sigma)$ and some operator version of properties of $\mathrm{D}_{f}%
^{\max}$ has:
\begin{align*}
&  g_{f}(\rho,\sigma)\\
&  :=\left\{
\begin{array}
[c]{cc}%
\sqrt{\sigma}f\left(  \sigma^{-1/2}\rho\sigma^{-1/2}\right)  \sqrt{\sigma}, &
\text{if }\mathrm{supp}\,\sigma\supset\mathrm{supp}\,\rho,\\
g_{f}(\tilde{\rho},\sigma)+\hat{f}(0)(\rho-\tilde{\rho}), & \text{if
}\mathrm{supp}\,\sigma\not \supset \mathrm{supp}\,\rho\text{, }\hat
{f}(0)<\infty,\\
\text{undefined,} & \text{otherwise.}%
\end{array}
\right.
\end{align*}
In the second case, since $\hat{f}(0)<\infty$, by (\ref{f=ay+h-2}),
\begin{align}
g_{f}(\rho,\sigma)  &  =g_{f}(\tilde{\rho},\sigma)+\hat{f}(0)\left(
\rho-\tilde{\rho}\right)  =g_{f_{0}}(\tilde{\rho},\sigma)+\hat{f}%
(0)\rho\nonumber\\
&  =\inf\left\{  g_{f_{0}}\left(  \rho_{\ast},X\right)  +\hat{f}(0)\rho
;0\leq\rho_{\ast}\leq\rho,\mathrm{supp}\,X\supset\mathrm{supp}\,\rho_{\ast
}\right\}  . \label{g_f=inf}%
\end{align}

\begin{remark}
In \cite{Ebadian}\cite{Effros}, they define $g_{f}(\rho,\sigma)$ only for the
case where $\mathrm{supp}\,\sigma\supset\mathrm{supp}\,\rho$, and proves
various properties of the quantity including ones presented below.
\end{remark}

The most important one, which is used later, is operator version of (D1):

\begin{lemma}
\label{lem:gf-monotone}(i) For any positive trace preserving map $\Lambda$, we
have
\begin{equation}
\Lambda\left(  g_{f}(\rho,\sigma)\right)  \geq g_{f}\left(  \Lambda\left(
\rho\right)  ,\Lambda(\sigma)\right)  . \label{f(L)<Lf}%
\end{equation}

(ii) If $g_{f}\left(  \Lambda\left(  \rho\right)  ,\Lambda(\sigma)\right)
=\Lambda\left(  g_{f}(\rho,\sigma)\right)  $, then $\Lambda(\tilde{\rho})$ is
the largest positive operator supported on $\Lambda(\sigma)$ and majorized by
$\Lambda\left(  \rho\right)  $. Thus,%
\begin{equation}
\Lambda\left(  g_{f}(\tilde{\rho},\sigma)\right)  =g_{f}\left(  \Lambda
(\tilde{\rho}),\Lambda(\sigma)\right)  . \label{fL=Lf}%
\end{equation}

\end{lemma}

\begin{proof}
For a given positive trace preserving map $\Lambda$, define
\begin{equation}
\Lambda_{\sigma}\left(  X\right)  :=\left\{  \Lambda(\sigma)\right\}
^{-1/2}\Lambda\left(  \sigma^{1/2}X\sigma^{1/2}\right)  \left\{
\Lambda(\sigma)\right\}  ^{-1/2}, \label{def-Ls}%
\end{equation}
which is a positive unital map into $\mathcal{B}\left(  \mathrm{supp\,}%
\Lambda(\sigma)\right)  $:%
\begin{equation}
\Lambda_{\sigma}(\mathbf{1})=\pi_{\Lambda(\sigma)}, \label{L-unital}%
\end{equation}
and
\begin{equation}
\Lambda_{\sigma}\left(  d(\rho,\sigma)\right)  =d\left(  \Lambda(\rho
),\Lambda(\sigma)\right)  . \label{jencova}%
\end{equation}

\end{proof}

If $\mathrm{supp\,}\sigma\supset\mathrm{supp\,}\rho$, since $\Lambda$ is
positive, $\Lambda(\rho)$ is supported on $\mathrm{supp}$ $\Lambda(\sigma)$
and $d\left(  \Lambda(\rho),\Lambda(\sigma)\right)  $ exists. Also,
\begin{align}
g_{f}  &  \left(  \Lambda(\rho),\Lambda(\sigma)\right)  \underset{(a)}%
{=}\Lambda(\sigma)^{1/2}\,f\,\left(  \Lambda_{\sigma}\left(  \,d(\rho
,\sigma)\,\right)  \,\right)  \,\Lambda(\sigma)^{1/2}\nonumber\\
&  \underset{(b)}{\leq}\Lambda(\sigma)^{1/2}\,\Lambda_{\sigma}\left(
\,f\,\left(  \,d(\rho,\sigma)\,\right)  \,\right)  \,\Lambda(\sigma
)^{1/2}=\Lambda\left(  \sigma^{1/2}f\left(  d(\rho,\sigma)\right)
\sigma^{1/2}\right)  , \label{gf(LL)}%
\end{align}
where (a) and (b) is by (\ref{jencova}) and Proposition\thinspace
\ref{prop:jensen}, respectively. If $\mathrm{supp\,}\sigma\not \supset
\mathrm{supp\,}\rho$ and $\hat{f}(0)<\infty$,
\begin{align*}
&  g_{f}\left(  \Lambda(\rho),\Lambda(\sigma)\right) \\
&  =\inf_{\rho_{\ast}}\left\{  g_{f_{0}}\left(  \rho_{\ast},\Lambda
(\sigma)\right)  +\hat{f}(0)\Lambda(\rho);\text{ }\Lambda(\rho)\geq\rho_{\ast
}\geq0,\text{ }\mathrm{supp}\,\Lambda(\sigma)\supset\mathrm{supp}\,\rho_{\ast
}\right\} \\
&  \leq g_{f_{0}}\left(  \Lambda(\tilde{\rho}),\Lambda(\sigma)\right)
+\hat{f}(0)\Lambda\left(  \rho\right) \\
&  \leq\Lambda\left(  g_{f_{0}}\left(  \tilde{\rho},\,\sigma\right)  \right)
+\hat{f}(0)\Lambda\left(  \rho\right) \\
&  =\Lambda\left(  g_{f_{0}}\left(  \tilde{\rho},\,\sigma\right)  +\hat
{f}(0)\rho\right)  =\Lambda\left(  g_{f}(\rho,\sigma)\right)  ,
\end{align*}
where the inequality in the fourth line is by (\ref{gf(LL)}) (Recall
$\tilde{\rho}$ as of (\ref{rho-tilde}) is supported on $\mathrm{supp}\,\sigma$).

Therefore, if $g_{f}\left(  \Lambda(\rho),\Lambda(\sigma)\right)
=\Lambda\left(  g_{f}(\rho,\sigma)\right)  $, $\Lambda(\tilde{\rho})$ should
achieve the infimum in the second line. Thus the first statement of (ii) is
true. Then,
\[
g_{f}\left(  \Lambda(\rho),\Lambda(\sigma)\right)  =g_{f}\left(
\Lambda(\tilde{\rho}),\Lambda(\sigma)\right)  +\hat{f}(0)\Lambda\left(
\rho-\tilde{\rho}\right)  .
\]
Equating this to $\Lambda\left(  g_{f}(\rho,\sigma)\right)  =\Lambda\left(
g_{f}(\tilde{\rho},\sigma)\right)  +\hat{f}(0)\Lambda(\tilde{\rho})$, we have
(\ref{fL=Lf}).

Operator versions of (\ref{Dmax-homogeneous}) and (\ref{Dmax-sum}) are
trivial. Thus next we show the operator versions of (\ref{D-convex}):
\begin{equation}
g_{f}(\rho,\sigma)\leq\sum_{i}c_{i}\,g_{f}\left(  \rho_{i},\sigma_{i}\right)
, \label{D'-convex}%
\end{equation}
where $\rho:=\sum_{i}c_{i}\rho_{i}$, $\sigma:=\sum_{i}c_{i}\sigma_{i}$,
$\sum_{i}c_{i}=1$, and $c_{i}\geq0$. (\ref{D'-convex}) for the case
$\mathrm{supp}\,\sigma\supset\mathrm{supp}\,\rho$ is known \cite{Ebadian}. if
$\hat{f}(0)<\infty$,
\begin{align*}
&  \sum_{i}c_{i}\,g_{f}\left(  \rho_{i},\sigma_{i}\right)  =\sum_{i}%
c_{i}\,g_{f}\left(  \tilde{\rho}_{i},\sigma_{i}\right)  +\hat{f}(0)\sum
_{i}c_{i}\left(  \rho_{i}-\tilde{\rho}_{i}\right)  \,\\
&  \geq g_{f_{0}}\left(  \sum_{i}c_{i}\,\tilde{\rho}_{i},\sum_{i}c_{i}%
\sigma_{i}\right)  +\hat{f}(0)\sum_{i}c_{i}\rho_{i}%
\end{align*}
Above, since $\mathrm{supp}\,\sigma=\mathrm{span}\{\mathrm{\,}\bigcup
\mathrm{supp}\,\sigma_{i}\}$, $\sum_{i}c_{i}\,\tilde{\rho}_{i}$ is supported
on $\mathrm{supp}\,\sigma$. Thus the last end is well-defined. Also, $\sum
_{i}c_{i}\tilde{\rho}_{i}\leq\sum_{i}c_{i}\rho_{i}=\rho$. Thus, The last end
is bounded from below by%

\begin{align*}
&  \inf\left\{  g_{f_{0}}\left(  \rho_{\ast},\sigma\right)  +\hat{f}%
(0)\rho;\rho\geq\rho_{\ast}\geq0,\mathrm{supp}\,\sigma\supset\mathrm{supp}%
\,\rho_{\ast}\right\} \\
&  =g_{f_{0}}(\tilde{\rho},\sigma)+\hat{f}(0)\rho=g_{f}(\rho,\sigma),
\end{align*}
concluding (\ref{D'-convex}).

Lastly, the analogue of (\ref{Dmax-monotone}) is
\begin{equation}
g_{f}(\rho,\sigma)\leq g_{f}\left(  \rho,X\right)  ,\,\,X\geq\sigma
.\label{D'-monotone}%
\end{equation}
This is proved as follows. Since $X\geq\sigma$, $C:=\sigma^{1/2}X^{-1/2}$
satisfies
\[
CX^{1/2}=\sigma^{1/2},\,\,\left\Vert C\right\Vert \leq1.
\]
If $\mathrm{supp}\,X\supset\mathrm{supp}\,\sigma\supset\mathrm{supp}\,\rho$,
\begin{align*}
g_{f}(\rho,\sigma) &  =X^{1/2}C^{\dagger}\,f\left(  d(\rho,\sigma)\right)
X^{1/2}C\\
&  \geq X^{1/2}\,\,f\,\left(  C^{\dagger}d(\rho,\sigma)C\right)
\,X^{1/2}=\,X^{1/2}\,\,f\,\left(  X^{-1/2}\rho\,X^{-1/2}\right)  X^{1/2}\\
&  =g_{f}\left(  \rho,X\right)  ,
\end{align*}
where the inequality in the second line\ is due to Proposition\thinspace
\ref{prop:convex-cfc}. If $\hat{f}(0)<\infty$ and $\mathrm{supp}%
\,\sigma\not \supset \mathrm{supp}\,\rho$,
\begin{align*}
g_{f}(\rho,\sigma) &  =g_{f_{0}}(\tilde{\rho},\sigma)+\hat{f}(0)\rho\geq
g_{f_{0}}\left(  \tilde{\rho},X\right)  +\hat{f}(0)\rho\\
&  \geq\inf\left\{  g_{f_{0}}\left(  \rho_{\ast},X\right)  +\hat{f}%
(0)\rho;0\leq\rho_{\ast}\leq\rho,\mathrm{supp}\,X\supset\mathrm{supp}%
\,\rho_{\ast}\right\}  \\
&  =g_{f}\left(  \rho,X\right)  .
\end{align*}

\begin{remark}
Here, $\tilde{\rho}$ is not necessarily the largest element of the set
\[
\{\rho_{\ast};0\leq\rho_{\ast}\leq\rho,\mathrm{supp}\,X\supset\mathrm{supp}%
\,\rho_{\ast}\}.
\]
Thus, in general, the equality between the second and the third line does not hold.
\end{remark}

\subsection{Relation to RLD Fisher metric}

$\mathrm{D}_{f}^{\max}$ is closely related to RLD Fisher metric\thinspace\
\[
J_{\rho}^{R}\left(  X,Y\right)  :=\mathrm{tr}\,X\rho^{-1}Y,
\]
where $X$ and $Y$ are self -\thinspace adjoint operators living in the support
of $\rho>0$, $\ $with $\mathrm{tr}\,X=\mathrm{tr}\,Y=0$. This quantity plays
important role in quantum statistical estimation theory \cite{Holevo}, and is
the largest monotone metric on the space of density operators \cite{Petz}.
Also, this quantity is the solution to infinitesimal version of reverse test
\cite{Matsumoto:05}; The triple $\left(  \Gamma,p,v\right)  $ of the positive
trace preserving map $\Gamma$ from positive measures to self\thinspace
-\thinspace adjoint operators, the probability distribution over the finite
set $\mathcal{X}$, and the real valued function over $\mathcal{X}$ with
$\sum_{x\in\mathcal{X}}v(x)$, is said to be \textit{reverse estimation of
}$\left\{  \rho,X\right\}  $ iff
\[
\Gamma\left(  p\right)  =\rho,\Gamma\left(  v\right)  =X.
\]
Then
\[
J_{\rho}^{R}\left(  X,X\right)  =\inf\left\{  \sum_{x\in\mathcal{X}}%
\frac{\left\{  v(x)\right\}  ^{2}}{p(x)};\left(  \Gamma,p,v\right)  \text{ is
a reverse estimation of }\left\{  \rho,X\right\}  \right\}  .
\]
Here, the function minimized is called \textit{Fisher information} and plays
significant roll in point estimation.

This problem reduces to our reverse test problem of $\left\{  \rho+\varepsilon
X,\rho\right\}  $, where $\varepsilon$ is chosen so that $\rho+\varepsilon
X\geq0$, and $f(r)=$ $\left(  1-r\right)  ^{2}$. Since $f(r)-1$ satisfies (F),
\ (\ref{Dmax-formula}) shows the above identity, and the optimal reverse
estimation is the one such that $\left(  \Gamma,\,\left\{  p+\varepsilon
v,p\right\}  \right)  $ is the minimal reverse test of $\left\{
\rho+\varepsilon X,\rho\right\}  $.

Below, we prove
\begin{equation}
f^{\prime\prime}(1)J_{\rho}^{R}\left(  X,X\right)  ,=\left.  \frac
{\mathrm{d}^{2}}{\mathrm{d}\varepsilon\,^{2}}\mathrm{D}_{f}^{\max}\left(
\rho+\varepsilon X\Vert\rho\right)  \right\vert _{\varepsilon=0}=\left.
\frac{\mathrm{d}^{2}}{\mathrm{d}\varepsilon\,^{2}}\mathrm{D}_{f}^{\max}\left(
\rho\Vert\rho+\varepsilon X\right)  \right\vert _{\varepsilon=0}
\label{Dmax=RLD}%
\end{equation}
when $f^{\prime\prime\prime}$ exists and uniformly bounded in the sense that
\[
\left\vert f^{\prime\prime\prime}(x)\right\vert <c,\,\,\exists\varepsilon
>0\forall x\in\left(  1-\varepsilon,1+\varepsilon\right)  .\,
\]
(Differentiating (\ref{Dmax-formula}) twice, one can also obtain
(\ref{Dmax=RLD}).

The key observation is: The minimal reverse test of $\left\{  \rho+\varepsilon
X,\rho\right\}  $ are the same for all $\varepsilon>0$ with $\rho+\varepsilon
X\geq0,$ Therefore, differentiating the both sides of
\[
\mathrm{D}_{f}^{\max}\left(  \rho+\varepsilon X||\rho\right)  =\mathrm{D}%
_{f}\left(  p+\varepsilon v||p\right)
\]
twice, well-known relation between Fisher information and $f$-divergence leads
to the first identity. The second identity follows from Corollary\thinspace
\ref{cor:minimal-same} (which will be shown later) that states the minimal
reverse test of $\left\{  \rho+\varepsilon X,\rho\right\}  $ and $\left\{
\rho,\rho+\varepsilon X\right\}  $ are identical.

\subsection{Essential uniqueness of optimal reverse test}

\label{sec:unique}

In this section, we show that any optimal reverse test is essentially
identical to the minimal reverse test, provided that (F) is satisfied. First,
we show some technical lemmas. (They themselves are of interest. We show
another application of them in the next section)

A function $f$ with (F), by Theorem\thinspace8.1 of \cite{HiaiMosonyiPetzBeny}%
, is written as
\begin{equation}
f(r)=cr+br^{2}+\int_{\left(  0,\infty\right)  }\left(  \frac{r}{1+\lambda
}+\psi_{\lambda}(r)\right)  \mathrm{d}\mu\left(  \lambda\right)  ,
\label{lowner-2}%
\end{equation}
where $c$ is a real number, $b>0$, $\mu$ is a positive Borel measure with
$\int_{\left(  0,\infty\right)  }\frac{\mathrm{d}\mu\left(  \lambda\right)
}{\left(  1+\lambda\right)  ^{2}}<\infty$, and $\psi_{\lambda}(r)\colon
=-\frac{r}{\lambda+r}$. \ 

In what follows, we suppose
\begin{equation}
\lbrack0,\infty)\subset\mathrm{supp\,}\mu\cup\left\{  0\right\}  ,
\label{spec<support}%
\end{equation}
where $\mathrm{supp\,}\mu$ is the set of all points $r$ having property that
$\mu\left(  U\right)  >0$ for any open set $U$ containing $\lambda$ (see
Theorem 2.2.1 and Definition 2.2.1 of \cite{Parthasarathy}.) $r\ln r$,
$\left(  \pm1\right)  r^{\alpha}\,$\ ($-1\leq\alpha\leq1$, $\alpha\neq1$)
satisfies (\ref{spec<support}) (see Example 8.3, \cite{HiaiMosonyiPetzBeny}).

\begin{lemma}
\label{lem:Lh=hL}Suppose $f$ satisfies (F) and (\ref{spec<support}). Suppose
also $\mathrm{D}_{f}^{\max}\left(  \rho||\sigma\right)  <\infty$. Let
$\Lambda$ be a positive trace preserving map. Then,
\begin{equation}
\mathrm{D}_{f}^{\max}\left(  \rho||\sigma\right)  =\mathrm{D}_{f}^{\max
}\left(  \Lambda\left(  \rho\right)  ||\Lambda(\sigma)\right)  \label{D=DL}%
\end{equation}
implies
\begin{equation}
\Lambda_{\sigma}\left(  h\left(  d(\tilde{\rho},\sigma)\right)  \right)
=h\left(  \Lambda_{\sigma}\left(  d(\tilde{\rho},\sigma)\right)  \right)  .
\label{Lh=hL}%
\end{equation}
Here, $\Lambda_{\sigma}$ is a subunital positive map defined by (\ref{def-Ls}%
), and $h$ is an arbitrary function on $[0,\infty)$.

Conversely, if (\ref{Lh=hL}) holds, (\ref{D=DL}) holds for any $f$ with (F).
In fact, for any function $h$ on $[0,\infty)$, $\ $%
\begin{equation}
\mathrm{tr}\,\sigma h(d(\tilde{\rho},\sigma))=\mathrm{tr}\,\Lambda
(\sigma)h(d\left(  \Lambda(\tilde{\rho}),\Lambda(\sigma)\right)  ).
\label{D=DL-2}%
\end{equation}

\end{lemma}

\begin{proof}
By (\ref{f(L)<Lf}), (\ref{D=DL}) and $\mathrm{D}_{f}^{\max}\left(
\rho||\sigma\right)  <\infty$ implies
\begin{equation}
\Lambda\left(  g_{f}(\rho,\sigma)\right)  =g_{f}\left(  \Lambda(\rho
),\Lambda(\sigma)\right)  . \label{f(L)=Lf}%
\end{equation}

First, suppose $\mathrm{supp}\,\rho\subset\mathrm{supp}\,\sigma$. Observe, by
(\ref{lowner-2}),
\begin{equation}
g_{f}(\rho,\sigma)=c\rho+b\,g_{f_{2}}(\rho,\sigma)+\int_{\left(
0,\infty\right)  }\left(  \frac{\rho}{1+\lambda}+g_{\psi_{\lambda}}%
(\rho,\sigma)\right)  \mathrm{d}\mu\left(  \lambda\right)  \label{lowner-3}%
\end{equation}
where $f_{2}(r):=r^{2}$. Since $f_{2}$ and $\psi_{\lambda}(r)=-\frac
{r}{\lambda+r}$ satisfies (F), (\ref{f(L)<Lf}). (\ref{f(L)=Lf}) and
(\ref{spec<support}) lead to
\[
\Lambda\left(  g_{\psi_{t}}(\rho,\sigma)\right)  =g_{\psi_{t}}\left(
\Lambda\left(  \rho\right)  ,\Lambda(\sigma)\right)  \,,\,\,\forall t>0.
\]
This, by Proposition\thinspace\ref{prop:f-finite}, implies%
\begin{equation}
\Lambda\left(  g_{h}(\rho,\sigma)\right)  =g_{h}\left(  \Lambda\left(
\rho\right)  ,\Lambda(\sigma)\right)  , \label{L(h(d))=}%
\end{equation}
which, using (\ref{jencova}), implies (\ref{Lh=hL}).

Next, suppose $\mathrm{supp}\,\rho\not \subset \mathrm{supp}\,\sigma$. Then
for $\mathrm{D}_{f}^{\max}\left(  \rho||\sigma\right)  <\infty$ to be true,
$\hat{f}(0)<\infty$ should hold. Therefore, by (\ref{fL=Lf}),%
\[
\,\Lambda\left(  g_{f}(\tilde{\rho},\sigma)\right)  =g_{f}\left(
\Lambda(\tilde{\rho}),\Lambda(\sigma)\right)  .
\]
Therefore, using the parallel argument as above, we have (\ref{Lh=hL}).

The second assertion of the theorem is proved by straightforward computation.
\end{proof}

\begin{lemma}
\label{th:preserve}Suppose $f$ satisfies (F) and (\ref{spec<support}). Suppose
also $\mathrm{D}_{f}^{\max}\left(  \rho||\sigma\right)  <\infty$. Let
$\Lambda$ be a positive trace preserving map. Also, let $\left(
\Gamma,\left\{  p,q\right\}  \right)  $ and $\left(  \Gamma^{\prime},\left\{
p^{\prime},q^{\prime}\right\}  \right)  $ be the minimal reverse test of
$\left\{  \rho,\sigma\right\}  $ and $\left\{  \Lambda\left(  \rho\right)
,\Lambda(\sigma)\right\}  $, respectively. Then, (\ref{D=DL}) holds iff
\begin{equation}
\Lambda\left(  \Gamma(\delta_{x})\right)  =\Gamma^{\prime}(\delta
_{x}),\,\,\left\{  p,q\right\}  =\left\{  p^{\prime},q^{\prime}\right\}  .
\label{LG=G'}%
\end{equation}

\end{lemma}

\begin{proof}
Since `if' is trivial, we prove `only if'. Recall the minimal reverse test is
given by
\begin{align*}
\Gamma(\delta_{x})  &  =\frac{1}{\mathrm{tr}\,\left(  \rho-\tilde{\rho
}\right)  }\left(  \rho-\tilde{\rho}\right)  ,\\
\Gamma(\delta_{x})  &  =\frac{1}{\mathrm{tr}\,\sigma P_{x}}\sqrt{\sigma}%
P_{x}\sqrt{\sigma},\,\left(  x\neq x_{0}\right)  ,
\end{align*}
where $d(\tilde{\rho},\sigma)=\sum_{x}d_{x}P_{x}$, where $d_{x}$ and $P_{x}$
is an eigenvalue and projection onto eigenspace, respectively.

Let $P_{x}^{\prime}:=\Lambda_{\sigma}\left(  P_{x}\right)  $, and applying
(\ref{Lh=hL}) with
\[
h_{0}(r):=\left\{
\begin{array}
[c]{cc}%
1, & \left(  r=d_{x}\right)  ,\\
0, & \text{otherwise}.
\end{array}
\right.
\]
we have
\[
P_{x}^{\prime}=\Lambda_{\sigma}\left(  P_{x}\right)  =\Lambda_{\sigma}\left(
h_{0}\left(  d(\tilde{\rho},\sigma)\right)  \right)  =h_{0}\left(
\Lambda_{\sigma}\left(  d(\tilde{\rho},\sigma)\right)  \right)  .
\]
Since eigenvalues of $h_{0}\left(  \Lambda_{\sigma}\left(  d\right)  \right)
$ are either 0 or 1, $P_{x}^{\prime}$ is a projector. Since
\[
d\left(  \Lambda(\tilde{\rho}),\Lambda(\sigma)\right)  =\Lambda_{\sigma
}\left(  d(\tilde{\rho},\sigma)\right)  =\sum_{x}d_{x}\Lambda_{\sigma}\left(
P_{x}\right)  =\sum_{x}d_{x}P_{x}^{\prime},
\]
$P_{x}^{\prime}$s are the projectors onto the eigenspaces of $d^{\prime}$, and
$\mathrm{spec\,}d=\mathrm{spec\,}d^{\prime}$. Therefore, if $x\neq x_{0}$,%
\begin{align*}
\Lambda\left(  \Gamma(\delta_{x})\right)   &  =\Lambda\left(  \sqrt{\sigma
}P_{x}\sqrt{\sigma}\right)  =\sqrt{\Lambda(\sigma)}\Lambda_{\sigma}\left(
P_{x}\right)  \sqrt{\Lambda(\sigma)}\\
&  =\sqrt{\Lambda(\sigma)}P_{x}^{\prime}\sqrt{\Lambda(\sigma)}=\Gamma^{\prime
}(\delta_{x}).
\end{align*}

Since $\Lambda$ is trace preserving, this relation is easily checked for
$x=x_{0}$,
\begin{align*}
\Gamma^{\prime}(\delta_{x_{0}})  &  =\frac{1}{\mathrm{tr}\,\Lambda\left(
\rho-\tilde{\rho}\right)  }\Lambda\left(  \rho-\tilde{\rho}\right) \\
&  =\Lambda\left(  \frac{1}{\mathrm{tr}\,\left(  \rho-\tilde{\rho}\right)
}\left(  \rho-\tilde{\rho}\right)  \right)  =\Lambda\left(  \Gamma\left(
\delta_{x_{0}}\right)  \right)  .
\end{align*}
Therefore, if $\left(  \Gamma^{\prime},\left\{  p^{\prime},q^{\prime}\right\}
\right)  $ is the minimal reverse test of $\left\{  \Lambda(\rho
),\Lambda(\sigma)\right\}  $, $\ \Gamma^{\prime}=\Lambda\circ\Gamma$. \ 

Having specified the map $\Gamma^{\prime}$, next we specify $\left\{
p^{\prime},q^{\prime}\right\}  $. For $\left(  \Lambda\circ\Gamma,\left\{
p^{\prime},q^{\prime}\right\}  \right)  $ to be a reverse test of $\left\{
\Lambda(\rho),\Lambda(\sigma)\right\}  $,%
\[
\sum_{x\neq x_{0}}q^{\prime}(x)\Lambda\circ\Gamma(\delta_{x})=\Lambda
(\sigma)=\sum_{x\neq x_{0}}q(x)\Lambda\circ\Gamma(\delta_{x}).
\]
Thus we have to have, for all $x\neq x_{0}$,
\[
\sum_{x}q^{\prime}(x)\sqrt{\Lambda(\sigma)}P_{x}^{\prime}\sqrt{\Lambda
(\sigma)}=\sum_{x}q(x)\sqrt{\Lambda(\sigma)}P_{x}^{\prime}\sqrt{\Lambda
(\sigma)}.
\]
Since $P_{x}^{\prime}$ is supported on $\mathrm{supp}\,d^{\prime
}=\mathrm{supp}\,\Lambda(\sigma)$, this is equivalent to
\[
\sum_{x}q^{\prime}(x)P_{x}^{\prime}=\sum_{x}q(x)P_{x}^{\prime}.
\]
Since $P_{x}^{\prime}$s are orthogonal projectors, we have $q^{\prime}=q$.

In the same way, we can prove that $p^{\prime}(x)=p(x)$, for all $x\neq x_{0}%
$. Then by trace preserving nature of $\Lambda$, obviously $p^{\prime}\left(
x_{0}\right)  =p\left(  x_{0}\right)  $, concluding $p^{\prime}=p$. Thus we
have the assertion.
\end{proof}

In the following, we extend the notion of the minimal reverse test to the pair
$\left\{  p,q\right\}  $ of positive measures, by identifying $p$ with the
diagonal matrix, $\sum_{x}p(x)\left\vert e_{x}\right\rangle \left\langle
e_{x}\right\vert $, where $\left\{  \left\vert e_{x}\right\rangle \right\}  $
is a CONS. Let $\left(  \Upsilon_{0},\left\{  p_{0},q_{0}\right\}  \right)  $
be the minimal reverse test of $\left\{  p,q\right\}  $, where $\left\{
p_{0},q_{0}\right\}  $ are positive measures over the finite set $\mathcal{Y}%
$. Then $\Upsilon_{0}$ is in fact stochastic map, but at the same time viewed
as the positive trace preserving map sending diagonal density matrices to
diagonal density matrices.

With this correspondence, the equivalence $\tilde{p}$ of $\tilde{\rho}$ (see
(\ref{rho-tilde}) ) is in fact restriction of $p$ to $\mathrm{supp}q$, and
\begin{align*}
d\left(  \tilde{p},q\right)   &  =\sum_{x\in\mathrm{supp}q}\frac{p(x)}%
{q(x)}\left\vert e_{x}\right\rangle \left\langle e_{x}\right\vert =\sum
_{y\in\mathcal{Y}\backslash\left\{  y_{0}\right\}  }r_{y}P_{y},\\
\mathrm{supp}\,P_{y}  &  =\mathrm{span}\,\{\left\vert e_{x}\right\rangle
;p(x)/q(x)=r_{y}\}.
\end{align*}

Viewing $\Upsilon_{1}$ as a stochastic map, $y$ ($\neq y_{0}$) is mapped to
$x$ iff $p(x)/q(x)=r_{y}$, and $y_{0}$ is mapped to $x\in\left(
\mathrm{supp}q\right)  ^{c}$. (Detailed form of the transition probability is
not relevant now.)

\begin{lemma}
\label{lem:c-invert}Let $\left(  \Upsilon_{0},\left\{  p_{0},q_{0}\right\}
\right)  $ be the minimal reverse test of $\left\{  p,q\right\}  $, where
$\left\{  p_{0},q_{0}\right\}  $ are positive measures over the finite set
$\mathcal{Y}$. Then, there is a positive trace preserving map $\Upsilon
_{0}^{-}$ that invert $\Upsilon_{0}$,
\[
\Upsilon_{0}^{-}\left(  p\right)  =p_{0},\Upsilon_{0}^{-}\left(  q\right)
=q_{0}.
\]
In addition, $\Upsilon_{0}^{-}$ is deterministic. Therefore,
\[
\Upsilon_{0}^{-}\circ \Upsilon_{0}\left(  \delta_{y}\right)  =\delta_{y}.
\]

\end{lemma}

\begin{proof}
$\Upsilon_{0}^{-}$ corresponds to the following deterministic map from
$\mathcal{X}$ to $\mathcal{Y}$ : $x\in\mathrm{supp}$ $q$ is mapped to $y$ iff
$p(x)/q(x)=r_{y}$, and $\ x\in\left(  \mathrm{supp}q\right)  ^{c}$ is mapped
to $y_{0}$. .
\end{proof}

\begin{remark}
In the statistician's term, $r_{y}$ is likelihood ratio, and thus $y$ is
minimal sufficient statistic of the family $\left\{  p,q\right\}  $
\cite{Strasser}. That roughly means $y$ contains all the information about the
family $\left\{  p,q\right\}  $, and the smallest one among those having the
same property. Thus, $\left\{  p_{0},q_{0}\right\}  $ is a kind of
"compression" of $\left\{  p,q\right\}  $. In fact, the map from $\left\{
p,q\right\}  $ to $\left\{  p_{0},q_{0}\right\}  $ is deterministic, while its
inverse is noisy.
\end{remark}

Lemmas \ref{th:preserve} \ and \ref{lem:c-invert} indicate that the optimal
reverse test is essentially unique.

\begin{theorem}
\label{th:unique}Suppose $\mathrm{D}_{f}^{\max}\left(  \rho||\sigma\right)
<\infty$, where $f$ is a function with (F), and $\mu$ defined by
(\ref{lowner-2}) satisfies(\ref{spec<support}). Let $\left(  \Gamma,\left\{
p,q\right\}  \right)  $ be an optimal reverse test,
\begin{equation}
\mathrm{D}_{f}^{\max}\left(  \rho||\sigma\right)  =\mathrm{D}_{f}\left(
p||q\right)  . \label{Dmax-achieve}%
\end{equation}
If $\left(  \Gamma_{1},\left\{  p_{1},q_{1}\right\}  \right)  $ is the minimal
reverse test of $\left\{  \rho,\sigma\right\}  $, there is a CPTP map
$\Upsilon_{0}$ and $\Upsilon_{0}^{-}$ with%
\begin{equation}
\left\{  p,q\right\}  \overset{\Upsilon_{0}}{\underset{\Upsilon_{0}^{-}%
}{\leftrightarrows}}\left\{  p_{1},q_{1}\right\}  , \label{<->}%
\end{equation}
and $\Upsilon_{0}^{-}$ is deterministic: $\Upsilon_{0}^{-}\circ \Upsilon
_{0}\left(  \delta_{y}\right)  =\delta_{y}$.Therefore,
\begin{equation}
\Gamma_{1}=\Gamma\circ \Upsilon_{0}. \label{<->2}%
\end{equation}

\end{theorem}

Before proving this, let us see its implication. (\ref{<->2}) intuitively
means that any optimal reverse test $\Gamma_{1}$ differs from the minimal one
only in its classical preprocessing ("essential uniqueness"). 

\begin{proof}
Let $\left(  \Upsilon_{0},\left\{  p_{0},q_{0}\right\}  \right)  $ be the
minimal reverse test of $\left\{  p,q\right\}  $. Then, taking recourse to
Lemma\thinspace\ref{th:preserve}, we have $\left\{  p_{1},q_{1}\right\}
=\left\{  p_{0},q_{0}\right\}  $.

Therefore,
\[
p=\Upsilon_{0}\left(  p_{1}\right)  ,\,q=\Upsilon_{0}\left(  q_{1}\right)  .
\]
Also, by Lemma\thinspace\ref{lem:c-invert}, there is a positive trace
preserving map $\Upsilon_{0}^{-}$ with
\begin{equation}
p_{1}=\Upsilon_{0}^{-}\left(  p\right)  ,q_{1}=\Upsilon_{0}^{-}\left(
q\right)  , \label{invert}%
\end{equation}
and $\Upsilon_{0}^{-}\circ \Upsilon_{0}\left(  \delta_{y}\right)  =\delta_{y}$.
\end{proof}

The following simple statement is not easy to prove directly, but quite easy
if the theorem is given.

\begin{corollary}
\label{cor:minimal-same}The minimal reverse test of $\left\{  \rho
,\sigma\right\}  $ and $\left\{  \sigma,\rho\right\}  $ are identical.
\end{corollary}

\begin{proof}
Let us consider an operator convex function $f(r)=-\sqrt{r}$ , that satisfies
(F) and (\ref{spec<support}) (Example\thinspace8.3, \cite{HiaiMosonyiPetzBeny}%
). Since $\hat{f}$ as of (\ref{f-hat}) satisfies $\hat{f}(r)=-\sqrt{r}=f(r)$,
by ( \ref{f-hat-f}), we have
\[
\mathrm{D}_{f}^{\max}\left(  \sigma\Vert\rho\right)  =\mathrm{D}_{\hat{f}%
}^{\max}(\rho\Vert\sigma)=\mathrm{D}_{f}^{\max}(\rho\Vert\sigma).
\]
Thus the minimal reverse test $\left(  \Gamma,\left\{  p,q\right\}  \right)  $
of $\left\{  \sigma,\rho\right\}  $ also achieves $\mathrm{D}_{f}^{\max}%
(\rho\Vert\sigma)$. Therefore, by Theorem\thinspace\ref{th:unique} $\left\{
p_{1},q_{1}\right\}  =\left\{  \Upsilon^{-}\left(  p\right)  ,\Upsilon
^{-}\left(  q\right)  \right\}  $, where $\left(  \Gamma_{1},\left\{
p_{1},q_{1}\right\}  \right)  $ is the minimal reverse test of $\left\{
\rho,\sigma\right\}  $, and $\Upsilon^{-}$ is a deterministic map. Exchanging
the role of $\left\{  \rho,\sigma\right\}  $ and $\left\{  \sigma
,\rho\right\}  $, there is a deterministic map $\Upsilon_{2}^{-}$ with
$\left\{  p,q\right\}  =\left\{  \Upsilon_{1}^{-}\left(  p_{1}\right)
,\Upsilon_{1}^{-}\left(  q_{1}\right)  \right\}  $. Therefore,
\[
\Gamma=\Gamma_{1}\circ \Upsilon^{-},\,\,\Gamma_{1}=\Gamma\circ \Upsilon_{1}%
^{-}.
\]
Thus we have the assertion.
\end{proof}

\subsection{Invertible reverse tests}

\label{subsec:invertible}

\begin{corollary}
\label{cor:invertible}Suppose $\mathrm{D}_{f}^{\max}\left(  \rho
||\sigma\right)  <\infty$, where $f$ is a function with (F), and $\mu$ defined
by (\ref{lowner-2}) satisfies (\ref{spec<support}). Then if there is a
measurement $M$ taking values on the finite set $\mathcal{Z}$ such that
\[
\mathrm{D}_{f}^{\max}\left(  \rho||\sigma\right)  =\mathrm{D}_{f}\left(
P_{\rho}^{M}\,||\,P_{\sigma}^{M}\right)  ,
\]
$\rho$ and $\sigma$ commute.
\end{corollary}

Intuitively, this result seems trivial: It is not possible to retrieve
classical information imbedded in quantum states perfectly, unless the quantum
states are in fact commutative (classical). However, as later turns out, this
result is generally not true if $f$ is not operator convex. A counter example
is $f(r)=\left\vert 1-r\right\vert $, and this corresponds to the total
variation distance, one of the most commonly used distance measure.

\begin{proof}
Suppose $\left\{  P_{\rho}^{M}\,,\,P_{\sigma}^{M}\right\}  $ is a positive
measures over the finite set $\mathcal{Z}$. Let $\left(  \Gamma,\left\{
p,q\right\}  \right)  $ and $\ \left(  \Upsilon_{0},\left\{  p_{0}%
,q_{0}\right\}  \right)  $ be the minimal reverse test of $\left\{
\rho,\sigma\right\}  $ and $\left\{  P_{\rho}^{M}\,,\,P_{\sigma}^{M}\right\}
$, respectively. Apply Lemma\thinspace\ref{th:preserve} considering the
measurement $M$ as a positive linear map. Then we have $\left\{  p_{0}%
,q_{0}\right\}  =\left\{  p,q\right\}  $ and
\begin{equation}
P_{\Gamma(\delta_{x})}^{M}=\Upsilon_{0}(\delta_{x}) \label{PM=Gd}%
\end{equation}

Since $\left\{  P_{\rho}^{M}\,,\,P_{\sigma}^{M}\right\}  $ are probability
distributions, by Lemma\thinspace\ref{lem:c-invert}, there is a positive trace
preserving map $\Upsilon_{0}^{-}$ with $\Upsilon_{0}^{-}\left(  \Upsilon
_{0}(\delta_{x})\right)  =\delta_{x}$. Composing them, we obtain
\[
\Upsilon_{0}^{-}\left(  P_{\Gamma(\delta_{x})}^{M}\right)  =\delta_{x}.
\]
The composition of the measurement $M$ followed by the data processing
$\Gamma_{0}^{-}$ can be viewed as a measurement, to which POVM $\{\tilde
{M}_{x}\}$ corresponds. Then, this can be rewritten as
\[
\mathrm{tr}\,\tilde{M}_{x}\Gamma(\delta_{x})=1,\,x\in\mathcal{X}\text{.}%
\]
Since $\mathrm{tr}\,\Gamma(\delta_{x})=1$, this means that $\Gamma(\delta
_{x})\Gamma\left(  \delta_{x^{\prime}}\right)  =0$ ($x^{\prime}\neq x$).
Therefore, $\rho=\sum_{x\in\mathcal{X}}p(x)\Gamma(\delta_{x})$ and
$\sigma=\sum_{x\in\mathcal{X}}q(x)\Gamma(\delta_{x})$ commute.
\end{proof}

\begin{proposition}
\label{prop:invertible-2}If $f$ satisfies (F) and $\left\vert f^{\prime
\prime\prime}(x)\right\vert <c,\,\,\exists\varepsilon>0\forall x\in\left(
1-\varepsilon,1+\varepsilon\right)  $. Let $\rho_{\varepsilon}:=\rho
+\varepsilon X>0$. Then if that there is a measurement $M_{\varepsilon}$ for
each $\varepsilon$ taking values on the finite set $\mathcal{Z}$ such that
\[
\mathrm{D}_{f}^{\max}\left(  \rho\Vert\rho_{\varepsilon}\right)
=\mathrm{D}_{f}(P_{\rho_{\varepsilon}}^{M_{\varepsilon}}\Vert\,P_{\rho
_{\varepsilon}}^{M_{\varepsilon}}),
\]
then $\rho$ and $X$ commute.
\end{proposition}

\begin{proof}
By (\ref{Dmax=RLD}) and by Section\thinspace9 of \cite{Matsumoto:14}, this is
equivalent to the existence of the measurement $M$ with
\[
J_{\rho}^{R}\left(  X,X\right)  =J_{p^{M}}\left(  v^{M},v^{M}\right)  ,
\]
where the RHS is the Fisher information of $p^{M}:=P_{\rho}^{M}$, $v^{M}%
=\frac{1}{\varepsilon}\left(  P_{\rho+\varepsilon X}^{M}-P_{\rho}^{M}\right)
$. But this is impossible unless $\rho$ and $X$ commute (see, for example,
\cite{Matsumoto:dr}).
\end{proof}

\subsection{Relation to comparison of state families}

Let $\left\{  \left\vert \hat{\varphi}_{x}\right\rangle ;x\in\mathcal{X}%
\right\}  $ be family of linearly independent state vectors. Also let
$\left\{  \tau_{x};x\in\mathcal{X}\right\}  $ be a family of density
operators. The necessary and sufficient condition for the existence of CPTP
map $\Lambda$ with \
\begin{equation}
\Lambda\left(  \tau_{x}\right)  =\left\vert \hat{\varphi}_{x}\right\rangle
\left\langle \hat{\varphi}_{x}\right\vert ,\,\forall x\in\mathcal{X}
\label{phi->phi}%
\end{equation}
have been studied by several authors. Especially, if $\tau_{x}=\left\vert
\varphi_{x}\right\rangle \left\langle \varphi_{x}\right\vert $, it is
expressed in the following very simple form.%
\[
\exists A\geq0\,\,\,\,\left\langle \varphi_{x}\right.  \left\vert
\varphi_{x^{\prime}}\right\rangle =A_{x,x^{\prime}}\left\langle \hat{\varphi
}_{x}\right.  \left\vert \hat{\varphi}_{x^{\prime}}\right\rangle
\]
(see \cite{Chefles}\cite{Uhlmann}).

Here we show this is equivalent to
\begin{equation}
\Lambda(\rho)=\hat{\rho},\Lambda(\sigma)=\hat{\sigma}, \label{r-s->r-s}%
\end{equation}
where
\begin{align*}
\rho &  :=\sum_{x}p(x)\tau_{x},\,\sigma:=\sum_{x}q(x)\tau_{x},\\
\hat{\rho}  &  :=\sum_{x}p(x)\left\vert \hat{\varphi}_{x}\right\rangle
\left\langle \hat{\varphi}_{x}\right\vert ,\,\hat{\sigma}:=\sum_{x}%
q(x)\left\vert \hat{\varphi}_{x}\right\rangle \left\langle \hat{\varphi}%
_{x}\right\vert ,
\end{align*}
and $\left\{  p,q\right\}  $ is a probability distributions over $\mathcal{X}$
such that, with $r_{x}:=p(x)/q(x)$,
\begin{equation}
r_{x}<\infty,\,\,r_{x}\neq r_{x^{\prime}},(x\neq x^{\prime}). \label{r-r}%
\end{equation}

That (\ref{phi->phi}) implies (\ref{r-s->r-s}) is trivial. To show the
opposite implication, we take recourse to Lemma\thinspace\ref{th:preserve}.

First, observe $\left(  \Gamma,\left\{  p,q\right\}  \right)  $ and $\left(
\hat{\Gamma},\left\{  p,q\right\}  \right)  $, where $\Gamma(\delta_{x}%
):=\tau_{x}$ and $\hat{\Gamma}(\delta_{x}):=\left\vert \hat{\varphi}%
_{x}\right\rangle \left\langle \hat{\varphi}_{x}\right\vert $, is a reverse
test of $\left\{  \rho,\sigma\right\}  $ and $\left\{  \hat{\rho},\hat{\sigma
}\right\}  $, respectively. In addition, the latter one is minimal, since we
had supposed that $\left\vert \hat{\varphi}_{x}\right\rangle $'s are linearly
independent and that $\left\{  p,q\right\}  $ satisfies (\ref{r-r}).

(Compute the minimal reverse test in the following manner. Define $N:=\sum
_{x}\left\vert \hat{\varphi}_{x}\right\rangle \left\langle e_{x}\right\vert $,
$D_{p}:=\sum_{x}p(x)\left\vert e_{x}\right\rangle \left\langle e_{x}%
\right\vert $, $D_{q}:=\sum_{x}q(x)\left\vert e_{x}\right\rangle \left\langle
e_{x}\right\vert $ . Then, $\hat{\sigma}=ND_{q}N^{\dagger}$. Therefore, there
is a unitary $U$ with \
\[
\hat{\sigma}^{1/2}=ND_{q}^{1/2}U.
\]
Therefore,
\[
\hat{\sigma}^{-1/2}\hat{\rho}\hat{\sigma}^{-1/2}=\sum_{x}r_{x}U^{\dagger
}\left\vert e_{x}\right\rangle \left\langle e_{x}\right\vert U.
\]
Therefore, the minimal reverse test maps $\delta_{x}$ to the constant multiple
of $\hat{\sigma}^{1/2}U^{\dagger}\left\vert e_{x}\right\rangle \left\langle
e_{x}\right\vert U\hat{\sigma}^{1/2}=q(x)\left\vert \hat{\varphi}%
_{x}\right\rangle \left\langle \hat{\varphi}_{x}\right\vert $. )

Therefore,%
\begin{align*}
\mathrm{D}_{f}(p\Vert q)  &  =\mathrm{D}_{f}^{\max}(\hat{\rho}\Vert\hat
{\sigma})=\mathrm{D}_{f}^{\max}\left(  \Lambda(\rho)\Vert\Lambda
(\sigma)\right) \\
&  \leq\mathrm{D}_{f}^{\max}(\rho\Vert\sigma)\leq\mathrm{D}_{f}(p\Vert q),
\end{align*}
indicating
\[
\mathrm{D}_{f}^{\max}\left(  \rho||\sigma\right)  =\mathrm{D}_{f}^{\max
}\left(  \hat{\rho}||\hat{\sigma}\right)  =\mathrm{D}_{f}\left(  p||q\right)
.
\]
By Theorem\thinspace\ref{th:unique}, the minimal reverse test of $\left\{
\rho,\sigma\right\}  $ should be essentially identical to $\left(
\Gamma,\left\{  p,q\right\}  \right)  $. But by the assumption (\ref{r-r}),
$\left(  \Gamma,\left\{  p,q\right\}  \right)  $ has to be the minimal reverse
test. Therefore, by Lemma\thinspace\ref{th:preserve}, (\ref{r-s->r-s}) implies
(\ref{phi->phi}).

\section{When one of the argument is a pure state}

\label{sec:pure}

From this section, again we remove the assumption of operator convexity and
$f(0)=0$, and come back to our initial assumption (FC). To start, we treat the
case where one of the argument is rank\thinspace-1.

Suppose $\sigma$ is rank\thinspace-1(the other case is reduce to this case by
replacing $f$ by $\hat{f}$), and apply (\ref{Dmax=}). Since $\tilde{\rho}$ is
constant multiple of $\sigma$, we have:
\begin{equation}
\mathrm{D}_{f}^{\max}\left(  \rho||\sigma\right)  =\sigma_{11}f\left(
\sigma_{11}^{-1}\tilde{\rho}\right)  +\hat{f}(0)\left(  \mathrm{tr}%
\,\rho-\tilde{\rho}\right)  ,\label{Dmax-example}%
\end{equation}
Though this coincide with (\ref{Dmax-formula}), it holds irrespective of the
assumption of operator convexity.

Especially, if $\rho$ is also rank\thinspace-1, $\tilde{\rho}=0$, thus%
\[
\mathrm{D}_{f}^{\max}\left(  \rho||\sigma\right)  =f(0)\mathrm{tr}%
\,\sigma+\hat{f}(0)\mathrm{tr}\,\rho,
\]
where $f(0)$ and/or $\hat{f}(0)$ may be $\infty$.

\section{Total variation distance}

\label{sec:total-variation}

\subsection{Set up and a general formula}

The divergence corresponding to $f(r)=\left\vert 1-r\right\vert $,%
\[
\mathrm{D}_{\left\vert 1-r\right\vert }(p\Vert q)=\left\Vert p-q\right\Vert
_{1},
\]
is called total variation distance. Its common quantum version is
\[
\left\Vert \rho-\sigma\right\Vert _{1}=\sup_{M}\left\Vert P_{\rho}%
^{M}-P_{\sigma}^{M}\right\Vert _{1},
\]
where $P_{\rho}^{M}$ is the distribution of the outcome of the measurement $M$
under $\rho$. This quantum version in fact is the smallest of all the quantum
versions satisfying (D1') and (D2):%
\begin{equation}
\mathrm{D}_{\left\vert 1-r\right\vert }^{Q}(\rho\Vert\sigma)\geq\left\Vert
\rho-\sigma\right\Vert _{1}. \label{Df1min}%
\end{equation}
Observe (D1') and (D2) imply
\[
\mathrm{D}_{\left\vert 1-r\right\vert }^{Q}(\rho\Vert\sigma)\geq\left\Vert
P_{\rho}^{M}-P_{\sigma}^{M}\right\Vert _{1}.
\]
Maximization of the RHS about $M$ leads to (\ref{Df1min}).

In this section we study $\mathrm{D}_{\left\vert 1-r\right\vert }^{\max}%
(\rho\Vert\sigma)$. Given a reverse test $\left(  \Gamma,\left\{  p,q\right\}
\right)  $ of $\left\{  \rho,\sigma\right\}  $, we define $\left(
\Gamma^{\prime},\left\{  p^{\prime},q^{\prime}\right\}  \right)  $, where
$\left\{  p^{\prime},q^{\prime}\right\}  $ are probability distributions on
$\left\{  0,1,2\right\}  $:
\begin{align}
\Gamma^{\prime}\left(  \delta_{0}\right)   &  :=\frac{1}{\mathrm{tr}%
\,A}A,\Gamma^{\prime}\left(  \delta_{1}\right)  :=\frac{\rho-A}{\mathrm{tr}%
\,(\rho-A)},\Gamma^{\prime}\left(  \delta_{2}\right)  :=\frac{\sigma
-A}{\mathrm{tr}\,(\sigma-A)},\nonumber\\
p^{\prime}(0) &  :=\mathrm{tr}\,A,\,\,p^{\prime}(1):=\mathrm{tr}%
\,(\rho-A),\,\,\,p^{\prime}\left(  2\right)  :=0,\nonumber\\
q^{\prime}(0) &  :=\mathrm{tr}\,A,\,\,q^{\prime}(1):=0,\,\,\,q^{\prime}\left(
2\right)  :=\mathrm{tr}\,(\sigma-A).\label{opt-|1-l|}%
\end{align}
where
\[
A:=\sum_{x\in\mathcal{X}}\min\left\{  p(x),q(x)\right\}  \Gamma(\delta_{x}).
\]
Then $\left(  \Gamma^{\prime},\left\{  p^{\prime},q^{\prime}\right\}  \right)
$ is a reverse test of $\left\{  \rho,\sigma\right\}  $ with $\left\Vert
p^{\prime}-q^{\prime}\right\Vert _{1}=\left\Vert p-q\right\Vert _{1}$.
Intuitively, $\Gamma^{\prime}\left(  \delta_{0}\right)  $ takes care of the
common part of two states, and $\Gamma^{\prime}\left(  \delta_{1}\right)  $
and $\Gamma^{\prime}\left(  \delta_{2}\right)  $ compensates the remainder.

Therefore, without loss of generality, we may restrict reverse tests to those
in the form of (\ref{opt-|1-l|}). Therefore:
\begin{equation}
\mathrm{D}_{\left\vert 1-r\right\vert }^{\max}(\rho\Vert\sigma)=\inf\left\{
\mathrm{tr}\,\left(  \rho+\sigma-2A\right)  ;A\geq0,\rho\geq A,\sigma\geq
A\right\}  .\label{Dmax-|1-l|}%
\end{equation}

\subsection{Invertible reverse test}

In this subsection and the next, under the assumption that $\mathrm{tr}%
\,\rho=\mathrm{tr}\,\sigma=1$, we study conditions for
\begin{equation}
\mathrm{D}_{\left\vert 1-r\right\vert }^{\max}(\rho\Vert\sigma)=\Vert
\rho-\sigma\Vert_{1}.\label{D=TV}%
\end{equation}
This identity implies uniqueness of quantum version of statistical distance,
and also indicates that classical total variation distance embedded into
quantum states can be completely recovered by measurements. It turns out that
the size of the set of all $\left\{  \rho,\sigma\right\}  $ 's satisfying
(\ref{D=TV}) is substantial. This is in contrast with the case of operator
convex functions, where the equivalence of (\ref{D=TV}) holds almost
exclusively for commutative pairs of states (see Subsection\thinspace
\ref{subsec:invertible}). 

Dropping the constraint $A\geq0$,
\begin{align*}
\mathrm{D}_{\left\vert 1-r\right\vert }^{\max}(\rho\Vert\sigma) &  \geq
\inf\left\{  \mathrm{tr}\,\left(  \rho+\sigma-2A\right)  ;\rho\geq
A,\sigma\geq A\right\}  \\
&  =\inf\left\{  2\mathrm{tr}\,(\rho-A);\rho\geq A,\sigma\geq A\right\}  \\
&  =\inf\left\{  2\mathrm{tr}\,(\rho-A);\rho-A\geq0,\rho-A\geq\rho
-\sigma\right\}  \\
&  =2\mathrm{tr}\,\left[  \rho-\sigma\right]  _{+}=\Vert\rho-\sigma\Vert_{1}.
\end{align*}
Here, the minimum in the third line is achieved if $\rho-A=\,\left[
\rho-\sigma\right]  _{+}$. ($\left[  X\right]  _{+}$ is the positive part of
the self-adjoint operator $X$.)

Therefore, (\ref{D=TV}) holds iff \
\begin{equation}
A=\rho-\,\left[  \rho-\sigma\right]  _{+}=\frac{1}{2}\left(  \rho
+\sigma-\left\vert \rho-\sigma\right\vert \right)  \geq0. \label{A>0}%
\end{equation}
(Here, $\left\vert X\right\vert :=\sqrt{X^{\dagger}X}$.) \ Another necessary
and sufficient condition is the existence of $A$, $\Delta_{1}$, $\Delta
_{2}\geq0$ with
\begin{align}
\rho &  =A+\Delta_{1},\sigma=A+\Delta_{2},\,\label{A+d}\\
\Delta_{1}\Delta_{2}  &  =0. \label{dd=0}%
\end{align}
To see this, observe
\begin{align*}
\Vert\Delta_{1}-\Delta_{2}\Vert_{1}  &  =\Vert\rho-\sigma\Vert_{1}\\
&  \leq\mathrm{D}_{\left\vert 1-r\right\vert }^{\max}(\rho\Vert\sigma
)=\min\left\{  \mathrm{tr}\,\Delta_{1}+\mathrm{tr}\,\Delta_{2};(\ref{A+d}%
),\Delta_{1}\geq0,\Delta_{2}\geq0\right\}  .
\end{align*}
For (\ref{D=TV}) to hold, existence of $\Delta_{1}$, $\Delta_{2}$ with
$\mathrm{tr}\,\Delta_{1}+\mathrm{tr}\,\Delta_{2}=\Vert\Delta_{1}-\Delta
_{2}\Vert_{1}$ is necessary and sufficient. Thus $\Delta_{1}\Delta_{2}=0$.

Of course, in general, (\ref{A>0}) is not true. For example, if $\rho$ is a
pure state and $\rho\neq c\sigma$, it is not true. (Let $f=\left\vert
1-r\right\vert $ in the formula (\ref{Dmax-example}). Then what we obtain is
very much different from $\Vert\rho-\sigma\Vert_{1}$.) However, if $\rho$ and
$\sigma$ are very close so that
\begin{equation}
\Vert\,\left\vert \rho-\sigma\right\vert \Vert\leq\text{minimum eigenvalue of
}\rho+\sigma, \label{close}%
\end{equation}
it is true.

Another sufficient condition is
\[
\left(  \rho-\sigma\right)  ^{2}=\left\vert \rho-\sigma\right\vert ^{2}%
\leq\left(  \rho+\sigma\right)  ^{2}.
\]
To see this is sufficient, take the square root of both sides of inequality:
then we obtain (\ref{A>0}). (Recall $\sqrt{\cdot}$ is operator monotone. This
condition is not necessary, since $r^{2}$ is not operator monotone.)
Rearranging the terms, we have
\begin{equation}
\rho\sigma+\sigma\rho\geq0. \label{ls+sl}%
\end{equation}

By (\ref{close}), $\mathrm{D}_{\left\vert 1-r\right\vert }^{\max}\left(
\rho\Vert\rho+\varepsilon X\right)  =\left\Vert \rho-\left(  \rho+\varepsilon
X\right)  \right\Vert _{1}$ for all small $\varepsilon>0$ and for all $X$. On
the other hand, if $f$ is operator convex, Proposition\thinspace
\ref{prop:invertible-2} indicates that $\mathrm{D}_{f}^{\max}\left(  \rho
\Vert\rho+\varepsilon X\right)  \neq\mathrm{D}_{f}^{\min}\left(  \rho\Vert
\rho+\varepsilon X\right)  $ for most of small $\varepsilon>0$ unless $\rho$
and $X$ commute.

\subsection{2\thinspace-\thinspace dimensional case}

In this subsection, we assume $\dim\mathcal{H}=2$ and $\mathrm{tr}%
\,\rho=\mathrm{tr}\,\sigma=1$, and compute the set $\left\{  \sigma
;\text{(\ref{D=TV})}\right\}  $ for each fixed $\rho$, using the necessary and
sufficient condition given by (\ref{A+d}) and (\ref{dd=0}). As it turns out,
this set is the spheroid, with focal points $\rho$ and $\mathbf{1}-\rho$, and
touching to the surface of Bloch sphere at each end of the longest axis.

Since $\mathrm{tr}\,\rho=\mathrm{tr}\,\sigma=1$,
\[
c:=\mathrm{tr}\,\Delta_{1}=\mathrm{tr}\,\Delta_{2}=1-\mathrm{tr}\,A,
\]
and
\[
0\leq c\leq1.
\]
Let $v_{\rho}$ , $v_{\sigma}$, $u_{1}$, $u_{2}$, and $u_{A}$ be the Bloch
vector of $\rho$, $\sigma$, $\frac{1}{c}\Delta_{1}$, $\frac{1}{c}\Delta_{2}$ ,
and $\frac{1}{1-c}A$ , respectively. Also, (\ref{dd=0}) holds iff $\Delta_{1}$
and $\Delta_{2}$ are rank\thinspace-\thinspace1 and $u_{2}=-u_{1}.$Therefore,
by (\ref{A+d}),
\[
v_{\rho}=cu_{1}+\left(  1-c\right)  u_{A},\,v_{\sigma}=-cu_{1}+\left(
1-c\right)  u_{A}.
\]
Therefore,%
\[
v_{\sigma}-v_{\rho}=-2cu_{1},\,v_{\sigma}-\left(  -v_{\rho}\right)  =2\left(
1-c\right)  u_{A}.
\]
Let $\left\Vert \cdot\right\Vert $ denote the Euclid norm in $\mathbb{R}^{3}$,
and
\[
\left\Vert v_{\sigma}-v_{\rho}\right\Vert +\left\Vert v_{\sigma}-\left(
-v_{\rho}\right)  \right\Vert =2\left(  c\left\Vert u_{1}\right\Vert +\left(
1-c\right)  \left\Vert u_{A}\right\Vert \right)  \leq2.
\]

The set $\left\{  \sigma;\text{(\ref{D=TV})}\right\}  $ is fairly large. For
example, if the largest eigenvalue of $\rho$ is $\leq0.85$, this occupies more
than the half of the volume of the Bloch sphere.

If
\[
\rho=\left[
\begin{array}
[c]{cc}%
a & \overline{c}\\
c & b
\end{array}
\right]  ,\sigma=\left[
\begin{array}
[c]{cc}%
a & -\overline{c}\\
-c & b
\end{array}
\right]  ,\,\,\,\left(  a\geq b\right)
\]
the minimization problem (\ref{Dmax-|1-l|}) is solved explicitly. With
$Z:=\mathrm{diag}\left(  1,-1\right)  $, $\sigma=Z\rho Z^{\dagger}$,
$\rho=Z\sigma Z^{\dagger}$. Thus, if $A$ satisfies constrains of
(\ref{Dmax-|1-l|}), so does $\frac{1}{2}(ZAZ^{\dagger}+A)$, and $\mathrm{tr}%
\,A=\mathrm{tr}\,\frac{1}{2}(ZAZ^{\dagger}+A)$. Therefore, without loss of
generality, we suppose $A$ is diagonal. After some elementary analysis, the
optimal $A$ turns out to be
\[
A=\left\{
\begin{array}
[c]{cc}%
\mathrm{diag}\,(a-\left\vert c\right\vert ,b-\left\vert c\right\vert ), &
(a\geq b\geq\left\vert c\right\vert )\\
\mathrm{diag}(a-\left\vert c\right\vert ^{2}b^{-1},0), & (a\geq\left\vert
c\right\vert \geq b)
\end{array}
\right.
\]
and we have
\[
\mathrm{D}_{\left\vert 1-r\right\vert }^{\max}(\rho\Vert\sigma)=\left\{
\begin{array}
[c]{cc}%
4|c|=\Vert\rho-\sigma\Vert_{1}, & (a\geq b\geq\left\vert c\right\vert )\\
2(b+\left\vert c\right\vert ^{2}b^{-1}). & (a\geq\left\vert c\right\vert \geq
b)
\end{array}
\right.
\]

\section{Dual Representation and Continuity}

\label{sec:duality}

In this section we give the dual of (\ref{Dmax-def}), or representation of
$\mathrm{D}_{f}^{\max}$ by maximization of a linear functional:%

\begin{equation}
\mathrm{D}_{f}^{\max}(\rho\Vert\sigma)=\sup_{\left(  W_{1},W_{2}\right)
\in\mathcal{W}_{f}^{\max}\left(  \mathcal{H}\right)  }\left\{  \mathrm{tr}%
\,\rho W_{1}+\mathrm{tr}\,\sigma W_{2}\right\}  , \label{Dmax-dual}%
\end{equation}
where
\begin{align*}
\mathcal{W}_{f}^{\max}\left(  \mathcal{H}\right)   &  :=\left\{  \left(
W_{1},W_{2}\right)  ;sW_{1}+tW_{2}-g_{f}(s,t)\mathbf{1}_{\mathcal{H}}%
\leq0,\forall s,t\in\lbrack0,1]\right\} \\
&  =\left\{  \left(  W_{1},W_{2}\right)  ;f(r)\mathbf{1}-rW_{1}-W_{2}%
\geq0,\,r\geq0\right\}
\end{align*}
(To see the equality in the second line, recall $g_{f}$ is positively
homogeneous and continuous.).

Let $\mathrm{D}_{f}^{Q}$ be a lower semicontinuous, proper, positively
homogeneous, and convex function over positive operators. Then by Corollary
13.5.1 of \cite{Rockafellar}, it should be in the form of
\[
\mathrm{D}_{f}^{Q}(\rho\Vert\sigma)=\sup_{\left(  W_{1},W_{2}\right)
\in\mathcal{W}_{f}^{Q}}\mathrm{tr}\,(\rho W_{1}+\sigma W_{2}),
\]
where $\mathcal{W}_{f}^{Q}$ is, without loss of generality, convex and
unbounded from below. In addition, suppose $\mathrm{D}_{f}^{Q}$ satisfies
(D1') and (D2).

Since $\mathrm{D}_{f}^{Q}$ satisfies (D2) with $p(1)=s$, $q(1)=t$,
$p(x)=q(x)=0$, $(x\neq1)$,
\begin{align*}
&  \mathrm{D}_{f}^{Q}\left(  s\left\vert e_{1}\right\rangle \left\langle
e_{1}\right\vert \Vert\,t\left\vert e_{1}\right\rangle \left\langle
e_{1}\right\vert \right) \\
&  =\sup_{\left(  W_{1},W_{2}\right)  \in\mathcal{W}_{f}^{Q}}\left(
s\left\langle e_{1}\right\vert W_{1}\left\vert e_{1}\right\rangle
+t\left\langle e_{1}\right\vert W_{2}\left\vert e_{1}\right\rangle \right)
=g_{f}(s,t)\text{.}%
\end{align*}
Since the second identity is true for all $s\geq0$, $t\geq0$ and $\left\vert
e_{1}\right\rangle $ with $\left\Vert e_{1}\right\Vert =1$, $\left(
W_{1},W_{2}\right)  \in\mathcal{W}_{f}^{\max}$. Therefore, $\mathcal{W}%
_{f}^{Q}\subset\mathcal{W}_{f}^{\max}$. Also, the RHS of (\ref{Dmax-dual})
satisfies (D2).

Therefore, the RHS of (\ref{Dmax-dual}) is the largest of all proper,
positively homogeneous, convex, lower and semicontinuous functionals with
(D2). As is easily verified, it also satisfies (D1) and (D1').

On the other hand, $\mathrm{D}_{f}^{\max}$ is the largest of all functionals
with (D1') and (D2) (Lemma\thinspace\ref{lem:Dmax>DQ}) and turns out to be
proper, positively homogeneous, convex, lower semicontinuous. Therefore:
\begin{equation}
\mathrm{cl\,D}_{f}^{\max}(\rho\Vert\sigma)=\sup_{\left(  W_{1},W_{2}\right)
\in\mathcal{W}_{f}^{\max}\left(  \mathcal{H}\right)  }\left\{  \mathrm{tr}%
\,\rho W_{1}+\mathrm{tr}\,\sigma W_{2}\right\}  . \label{clDmax=}%
\end{equation}

\begin{remark}
From above argument, if $f$ satisfies (FC), $\mathrm{D}_{f}^{\max}$ is the
largest of all $\mathrm{D}_{f}^{Q}$'s which are lower semicontinuous, proper,
positively homogeneous, convex, and satisfy (D2).
\end{remark}

Below, we show (\ref{Dmax-dual}) by proving that $\mathrm{D}_{f}^{\max}$ is
lower semicontinuous. It suffices to show this on
\[
\mathfrak{D}:=\{\left(  \rho,\sigma\right)  ;\rho\geq0,\sigma\geq
0,\,\mathrm{tr}\,\rho\leq1,\mathrm{tr}\,\sigma\leq1\},
\]
by $\mathrm{D}_{f}^{\max}(\lambda\rho\Vert\lambda\sigma)=\lambda\mathrm{D}%
_{f}^{\max}(\rho\Vert\sigma)$, $\forall\lambda>0$.Thus we suppose each reverse
test $(\Gamma,\{p,q\})$ satisfies%
\[
\sum_{x\in\mathcal{X}}q(x)\leq1,\sum_{x\in\mathcal{X}}p(x)\leq1.
\]
Also, by Lemma\thinspace\ref{lem:caratheodory}, we suppose $\{p,q\}$ are over
$\mathcal{X}$ with $\left\vert \mathcal{X}\right\vert \leq\left(
\dim\mathcal{H}\right)  ^{2}+3$. The set of all such reverse tests
$\mathfrak{T}=\{\tau=(\Gamma,\{p,q\})\}$ can be identified with a compact
subset of a finite dimensional real vector space.

Define maps $F_{1}(\tau):=\{\Gamma(p),\Gamma(q)\}$ and $F_{2}(\tau
):=\mathrm{D}_{f}(p\Vert q)$, and let
\[
\mathfrak{U}:=\{\upsilon=(\upsilon^{1},\upsilon^{2});\upsilon^{1}=F_{1}%
(\tau),\upsilon^{2}\geq F_{2}(\tau),\tau\in\mathfrak{T}\}.
\]

\begin{lemma}
Suppose (FC) is satisfied. Then the set $\mathfrak{U}$ is closed and identical
to $\mathrm{epi\,D}_{f}^{\max}|_{\mathfrak{D}}$.
\end{lemma}

\begin{proof}
Suppose $(\upsilon^{1},\upsilon^{2})\in\mathrm{cl}\,\mathfrak{U}$. Then for
any $\varepsilon>0$, $\overline{B}_{\varepsilon}(\upsilon^{1})\times
C_{\varepsilon}(\upsilon^{2})\cap\mathfrak{U}$ is not empty, where
$\overline{B}_{\varepsilon}(\upsilon^{1})$ is the closed $\varepsilon$-ball
centered at $\upsilon^{1}$, and $C_{\varepsilon}(\upsilon^{2}):=\{t;t\leq
\upsilon^{2}+\varepsilon\}$. Therefore, all the sets in the family
\[
\{F_{1}^{-1}(\overline{B}_{\varepsilon}(\upsilon^{1}))\cap F_{2}%
^{-1}(C_{\varepsilon}(\upsilon^{2}))\cap\mathfrak{T\}}_{\varepsilon>0}%
\]
are not empty, and in fact, they are closed subsets of the compact set
$\mathfrak{T}$, since $F_{1}$ is continuous and $F_{2}$ is lower
semicontinuous. Since the family has finite intersection property, the
intersection of these sets is not empty. Any element $\tau$ of this
intersection satisfies $\upsilon^{1}=F_{1}(\tau)$ and $\upsilon^{2}=F_{2}%
(\tau)$, indicating $\upsilon\in\mathfrak{U}$. Therefore, $\mathfrak{U}$ is closed.

The second statement follows by
\[
\mathfrak{U\subset\,}\mathrm{epi\,D}_{f}^{\max}|_{\mathfrak{D}}%
\mathfrak{\subset\,}\mathrm{cl\,}\mathfrak{U}.
\]
The first "$\mathfrak{\subset}$" by $F_{2}(\tau)\geq\mathrm{D}_{f}^{\max
}(\Gamma(p)\Vert\Gamma(q))$, and the second one is by the definition
(\ref{Dmax-def}) of $\mathrm{D}_{f}^{\max}$.
\end{proof}

\begin{lemma}
\label{lem:lower-semi}Suppose (FC) is satisfied. Then, $\mathrm{D}_{f}^{\max}$
is lower semicontinuous. Moreover, for each $\{\rho,\sigma\}$ such that
$\mathrm{D}_{f}^{\max}\dot{(}\rho\Vert\sigma)<\infty$, the infimum in
(\ref{Dmax-def}) is achieved by some $(\Gamma,\{p,q\})$.
\end{lemma}

\begin{proof}
It suffices to prove the assertion on $\mathfrak{D}$. The closedness of
$\mathrm{D}_{f}^{\max}|_{\mathfrak{D}}$ follows by closedness of
$\mathrm{epi\,D}_{f}^{\max}|_{\mathfrak{D}}$. \ Also, by the previous lemma,
\begin{align*}
\mathrm{\,epi\,D}_{f}^{\max}|_{\mathfrak{D}} &  =\{\,(\rho,\sigma
,t)\,;\text{\thinspace}(\rho,\sigma)\in\mathfrak{D},\,t\geq\mathrm{D}%
_{f}^{\max}(\rho\Vert\sigma)\,\}\\
&  =\{\,(\Gamma(p),\Gamma(q),t)\,;\,\tau\in\mathfrak{T},\,t\geq\mathrm{D}%
_{f}(p\Vert q)\}.
\end{align*}
Therefore, to each $\{\rho,\sigma\}$ there is a reverse test $(\Gamma
,\{p,q\})$ of $\{\rho,\sigma\}$ with $\{\,t\,;t\geq\mathrm{D}_{f}^{\max}%
(\rho\Vert\sigma)\,\}=\left\{  \,t\,;\,t\geq\mathrm{D}_{f}(p\Vert q)\right\}
$, or equivalently, $\mathrm{D}_{f}^{\max}(\rho\Vert\sigma)=\mathrm{D}%
_{f}(p\Vert q)$. This reverse test achieves the infimum in (\ref{Dmax-def}).
\end{proof}

By this lemma and (\ref{clDmax=}):

\begin{theorem}
\label{th:dual}If $f$ satisfies (FC), then (\ref{Dmax-dual}) holds. Moreover,
for each $\{\rho,\sigma\}$ such that $\mathrm{D}_{f}^{\max}\dot{(}\rho
\Vert\sigma)<\infty$, the infimum in (\ref{Dmax-def}) is achieved by some
$(\Gamma,\{p,q\})$.
\end{theorem}

\subsection{When $f$ is operator convex}

When $f$ satisfies the condition (F) and $\rho>0$ and $\sigma>0$, we can write
$\left(  W_{1\ast},W_{2\ast}\right)  $ achieving the maximum in
(\ref{Dmax-dual}) explicitly. \ 

Since $f$ is operator convex, it is differentiable. Hence the Frechet
derivative $\mathrm{D}f\left(  T\right)  $ of $f$ i.e., a linear transform in
$\mathcal{B}\left(  \mathcal{H}\right)  $ with%
\[
\Vert f\left(  T+X\right)  -f\left(  T\right)  -\mathrm{D}f\left(  T\right)
\left(  X\right)  \Vert_{2}=o\left(  \Vert X\Vert_{2}\right)
\]
is given by, in the basis which diagonalizes $T$,
\begin{equation}
\,\mathrm{D}f\left(  T\right)  \left(  X\right)  =\left[  f^{\left[  1\right]
}\left(  t_{i},t_{j}\right)  X_{i,j}\right]  ,\label{Frechet}%
\end{equation}
where $t_{i}$ ($i=1,\cdots$) are eigenvalues of $T$, and
\[
f^{\left[  1\right]  }\left(  t,t^{\prime}\right)  :=\left\{
\begin{array}
[c]{cc}%
\frac{f\left(  t\right)  -f\left(  t^{\prime}\right)  }{t-t^{\prime}}, &
\left(  t\neq t^{\prime}\right)  ,\\
f^{\prime}\left(  t\right)  , & \left(  t=t^{\prime}\right)  .
\end{array}
\right.
\]

Using $d(\rho,\sigma)$ as of (\ref{d-def}),%

\begin{align*}
\left.  \frac{\mathrm{d}}{\mathrm{d}t}\mathrm{D}_{f}^{\max}(\rho+tX\Vert
\sigma)\right\vert _{t=0} &  =\left.  \frac{\mathrm{d}}{\mathrm{d}%
t}\mathrm{tr}\,\,\sigma f(\sigma^{-\frac{1}{2}}\left(  \rho+tX\right)
\sigma^{-\frac{1}{2}})\right\vert _{t=0}\\
&  =\mathrm{tr}\,\sigma\mathrm{D}f(d(\rho,\sigma))(\sigma^{-\frac{1}{2}%
}X\sigma^{-\frac{1}{2}})\\
&  =\mathrm{tr}\,X\sigma^{-\frac{1}{2}}\mathrm{D}f(d(\rho,\sigma
))(\sigma)\sigma^{-\frac{1}{2}},
\end{align*}
where the last identity is by self-adjointness of $\mathrm{D}f\left(
T\right)  \left(  \cdot\right)  $ with respect to the inner product
$\mathrm{tr}\,XY$,
\begin{align}
\mathrm{tr}\,Y\mathrm{D}f\left(  T\right)  \left(  X\right)   &  =\sum
_{i,j}\overline{\rho_{2,i,j}}f^{[1]}(t_{i},t_{j})X_{i,j}=\sum_{i,j}%
\overline{f^{[1]}(t_{i},t_{j})\rho_{2,i,j}}X_{i,j}\nonumber\\
&  =\mathrm{tr}\,X\mathrm{D}f\left(  T\right)  \left(  Y\right)
.\label{self-adjoint}%
\end{align}

Replacing $f$ by $\hat{f}$, the derivative about the second argument is
computed similarly. Therefore,
\[
W_{1\ast}=\sigma^{-\frac{1}{2}}\left\{  \mathrm{D}\,f(d(\rho,\sigma
))(\sigma)\right\}  \sigma^{-\frac{1}{2}},\,\,W_{2\ast}=\rho^{-\frac{1}{2}%
}\left\{  \mathrm{D}\hat{f}(d(\sigma,\rho))(\rho)\right\}  \rho^{-\frac{1}{2}%
}.
\]
achieves the maximum in (\ref{Dmax-dual}). In fact, by (\ref{self-adjoint}),
\begin{align*}
\mathrm{tr}\,\left(  \rho W_{1,\ast}+\sigma W_{2,\ast}\right)   &
=\mathrm{tr}\,\sigma\mathrm{D}f(d(\rho,\sigma))(d(\rho,\sigma))+\mathrm{tr}%
\,\rho\mathrm{D}\hat{f}(d(\sigma,\rho))(d(\sigma,\rho))\\
&  =\mathrm{tr}\,\sigma f^{\prime}(d(\rho,\sigma))d(\rho,\sigma)+\mathrm{tr}%
\,\rho\hat{f}^{\prime}(d(\sigma,\rho))d(\sigma,\rho)\\
&  =\mathrm{D}_{f}^{\max}(\rho\Vert\sigma).
\end{align*}

For example, if $f(r)=r^{2}$,  $\hat{f}(r)=1/r$, $\mathrm{D}f\left(  T\right)
\left(  X\right)  =TX+XT$ and $\mathrm{D}\hat{f}\left(  T\right)  \left(
X\right)  =-TXT^{-1}$,
\[
W_{1\ast}=\sigma^{-1}\rho+\rho\sigma^{-1},W_{2\ast}=-\sigma^{-1}\rho^{2}%
\sigma^{-1}.
\]

\subsection{On continuity}

\label{sec:continuity}

In this subsection, some remarks on continuity of $\mathrm{D}_{f}^{\max}$ are
in order. By Lemma\thinspace\ref{lem:lower-semi} and Proposition\thinspace
\ref{prop:continuous}, if $f$ satisfies (FC),
\begin{equation}
\lim_{\varepsilon\downarrow0}\mathrm{D}_{f}^{\max}(\rho_{\varepsilon}%
\Vert\sigma_{\varepsilon})=\mathrm{D}_{f}^{\max}(\rho\Vert\sigma),
\label{limDf=Df}%
\end{equation}
where $\left\{  \left(  \rho_{\varepsilon},\sigma_{\varepsilon}\right)
\right\}  _{\varepsilon>0}$ is a straight line in the effective domain of
$\mathrm{D}_{f}^{\max}$.

$\left\{  \left(  \rho_{\varepsilon},\sigma_{\varepsilon}\right)  \right\}
_{\varepsilon>0}$ cannot be arbitrary curve for (\ref{limDf=Df}) to hold. To
see this, suppose that $\hat{f}(0)<\infty$ and $\sigma$ is a pure state, and
use (\ref{Dmax-example}). Let%
\[
\sigma=\left[
\begin{array}
[c]{cc}%
a & 0\\
0 & 0
\end{array}
\right]  ,\,\rho_{\varepsilon}=\left[
\begin{array}
[c]{cc}%
b & \sqrt{\varepsilon}C^{\dagger}\\
\sqrt{\varepsilon}C & \varepsilon D
\end{array}
\right]  ,
\]
and $\rho_{1}\geq0$, $\mathrm{tr}\,\rho_{\varepsilon}=1$. Then
\[
\tilde{\rho}_{\varepsilon}=b-\sqrt{\varepsilon}C(\varepsilon D)^{-1}%
\sqrt{\varepsilon}C^{\dagger}=b-CD^{-1}C^{\dagger}=\tilde{\rho}_{1}%
\]
is constant of $\varepsilon$, and
\[
\lim_{\varepsilon\downarrow0}\mathrm{D}_{f}^{\max}(\rho_{\varepsilon}%
\Vert\sigma)=\mathrm{D}_{f}^{\max}(\tilde{\rho}_{1}\Vert\sigma)+\hat
{f}(0)(1-\,\tilde{\rho}_{1})\neq\mathrm{D}_{f}^{\max}(\rho_{0}\Vert\sigma).
\]

However, $\{(\rho_{\varepsilon},\sigma_{\varepsilon})\}_{\varepsilon>0}$ need
not to be straight line, either. For example, consider a continuous curve
$\{\sigma_{\varepsilon}\}_{\varepsilon\geq0}$ of positive operators with
$\sigma_{0}=\sigma$, $\mathrm{supp}\sigma_{\varepsilon}\supset\mathrm{supp}%
\rho,\mathrm{supp}\sigma,$ and $\sigma_{\varepsilon}>\sigma$. Then, if
$f(0)=0$, (\ref{D'-monotone}) we have
\[
\lim_{\varepsilon\downarrow0}\mathrm{D}_{f}^{\max}(\rho\Vert\sigma
_{\varepsilon})\leq\mathrm{D}_{f}^{\max}(\rho\Vert\sigma).
\]
The opposite inequality results from lower semicontinuity of $\mathrm{D}%
_{f}^{\max}$. Thus (\ref{limDf=Df}) holds.

\subsection{Infinite dimensional separable Hilbert space}

\label{sec:infinite}

So far, we had supposed that the dimension of the underlying Hilbert space is
finite. Some of them, namely Sections\thinspace\ref{sec:property} and
\ref{sec:pure}, are obviously generalized to separable infinite dimensional
case. In this section, we consider generalization of (\ref{Dmax-dual}), thus
proving lower \thinspace semicontinuity. Also the existence of the minimum is
discussed. Throughout the section, we suppose $\hat{f}(0)<\infty$ and
$f(0)<\infty$. In this case, $g_{f}$ is not only lower semicontinuous, but
also continuous.

\begin{remark}
To see that $g_{f}$ is continuous, we only have to check it at the origin.
Observe $g_{f}(s,t)=\tilde{g}_{f}(s,t)+\hat{f}(0)s+f(0)\,t$, with $\tilde
{g}_{f}(s,t)\leq0$. Therefore, if $\left(  s_{k},t_{k}\right)  \rightarrow
(0,0)$,
\begin{align*}
\varlimsup_{k\rightarrow\infty}g_{f}(s_{k},t_{k})  &  =\varlimsup
_{k\rightarrow\infty}\{\tilde{g}_{f}(s,t)+\hat{f}(0)s_{k}+f(0)\,t_{k}%
\}=\varlimsup_{k\rightarrow\infty}\tilde{g}_{f}(s,t)\\
&  \leq0=g_{f}(s,t)\leq\varliminf_{k\rightarrow\infty}g_{f}(s_{k},t_{k}).
\end{align*}
indicating $\lim_{k\rightarrow\infty}$ $g_{f}(s_{k},t_{k})=g_{f}(s,t)$.
\end{remark}

In the RHS of (\ref{Dmax-dual}), let $\rho$ and $\sigma$ are positive element
of $\mathcal{B}_{1,sa}$, the space of all the self\thinspace-\thinspace
adjoint trace class operators (operators with finite trace), and $W_{1}$ and
$W_{2}$ are self\thinspace-\thinspace adjoint bounded operators,
$\mathcal{B}_{sa}$.

Also, we modify the definition of reverse tests. admitting all the positive
regular measures as input. To state that object, operator valued functions and
their integrals are used, see Section\thinspace2.3, \cite{Ryan} for these
concepts. In this new definition, the reverse test is specified by a regular
finite measure $\nu$ over the Borel sets of $[0,1]^{\times2}$, \ a $\nu
$-\thinspace measurable function $Z(s,t)$ from $(s,t)\in$ $[0,1]^{\times2}$
into $\mathcal{B}_{1,sa}$ with
\begin{align}
\mathrm{tr}\,Z(s,t)  &  =1,\nu\text{-\thinspace a.e.,}\label{trZ=1}\\
\int s\,Z(s,t)\mathrm{d}\nu &  =\rho,\int tZ(s,t)\mathrm{d}\nu=\sigma,
\label{reverse-test-infty}%
\end{align}
where the integrals of operator valued functions are understood as short for
\[
\int s\mathrm{tr}\,\,Z(s,t)W\mathrm{d}\nu=\mathrm{tr}\,\rho W,\,\forall
W\in\mathcal{B}_{sa}.
\]
Thus our new definition is:
\begin{equation}
\mathrm{D}_{f}^{\max}(\rho\Vert\sigma):=\inf\left\{  \int g_{f}%
(s,t)\mathrm{tr}\,Z(s,t)\mathrm{d}\nu;\text{(\ref{trZ=1}) and
(\ref{reverse-test-infty})}\right\}  \label{reverse-test-3}%
\end{equation}

This defines a positive map from measures which are absolutely continuous
relative to $\nu$ into $\mathcal{B}_{1,sa}$ such that $\Gamma\left(
\mu\right)  =\int\frac{\mathrm{d\,}\mu}{\mathrm{d\,}\nu}\,Z(s,t)\mathrm{d}\nu
$. This is `trace preserving' in the sense $\mu([0,1]^{\times2})=\mathrm{tr}%
\,\Gamma\left(  \mu\right)  .$Thus the pair $\nu$ and $Z$ represents a
"reverse test" $\left(  \Gamma,\left\{  P,Q\right\}  \right)  $, where $P$ and
$Q$ are positive measures with density $s$ and $t$, respectively.

\begin{remark}
A function $Z(s,t)$ is $\nu$- measurable iff its norm\thinspace-\thinspace
approximable by simple functions, but the definition is equivalent to that the
scalar valued function $(s,t)\rightarrow\mathrm{tr}\,Z(s,t)W$ is $\nu
$-measurable (Proposition 2.15, \cite{Ryan}). (\ref{reverse-test-infty}) may
be understood in the "weak" sense as stated, but since $\Vert sZ(s,t)\Vert
_{1}\leq\mathrm{tr}\,\,Z(s,t)$ is $\nu$-integrable, also can be understood as
Bochner integral, or the limit of integral of simple functions in norm. $\,$
\end{remark}

\begin{proposition}
\label{prop:duality} (Theorem 8.6.1, \cite{Luenberger}) Let $F$ be a
real-valued convex function defined on a convex subset $\Omega$ of a vector
space $\mathfrak{X}$, and let $G$ be a convex mapping of $\mathfrak{X}$ into a
partially ordered normed space $\mathfrak{Z}$. Define
\[%
\mu
_{0}:=\sup\{F(\vec{W});\vec{W}\in\Omega,G(\vec{W})\leq0\}.
\]
Then for any $\zeta_{0}^{\ast}\geq0$,
\begin{equation}
\mu_{0}\leq F_{\ast}\left(  \zeta_{0}^{\ast}\right)  :=\sup\{F(\vec
{W})+\langle\,\zeta_{0}^{\ast},G(\vec{W})\,\rangle;\vec{W}\in\Omega\}.
\label{weak-dual}%
\end{equation}
Also, if there exists an $\vec{W}_{0}$ such that $G(\vec{W}_{0})<0$ and
$\mu_{0}$ is finite,
\begin{equation}
\mu_{0}=\min_{\zeta_{0}^{\ast}\geq0}F_{\ast}(\zeta_{0}^{\ast}).
\label{strong-duality}%
\end{equation}

\end{proposition}

Below, we apply this proposition considering the RHS of (\ref{Dmax-dual}) as
the primal problem, and obtain the reverse test as its dual problem. To
proceed, we need to introduce a proper mathematical framework.

Consider the space $\mathcal{C}$ of continuous real valued functions on the
compact set $[0,1]^{\times2}$ and the space $\mathcal{B}_{sa}$ of the space of
the bounded self\thinspace-\thinspace adjoint linear operators on the Hilbert
space $\mathcal{H}$. Endorse $\mathcal{C}$ and $\mathcal{B}_{sa}$ with the
norm $\Vert h\Vert:=\sup_{(s,t)\in\lbrack0,1]^{\times2}}|h(s,t)|$ and the
operator norm $\Vert W\Vert$, respectively. From these two spaces, we compose
the linear space
\[
\left\{  \sum_{i=1}^{n}h^{(i)}W^{(i)}\,;h^{(i)}\in\mathcal{C}\,,W^{(i)}%
\in\mathcal{B}_{sa}\right\}  ,
\]
and its completion with respect to the projective norm
\[
\left\Vert z\right\Vert _{\pi}\colon=\inf\left\{  \sum_{i=1}^{n}\Vert
h^{(i)}\Vert\Vert W^{(i)}\Vert\,;\,z=\sum_{i=1}^{n}h^{(i)}W^{(i)}\,\right\}
\]
is denoted by $\mathfrak{Z}$. In fact, $\mathfrak{Z}$ is the projective tensor
product $\mathcal{C}\hat{\otimes}_{\pi}\mathcal{B}_{sa}$ . (That $\Vert
\cdot\Vert_{\pi}$ is a norm and $\Vert hW\Vert_{\pi}=\Vert h\Vert\Vert W\Vert$
is known \cite{Ryan}.)

Then for each $z\in\mathfrak{Z}$, there exist bounded sequences $\left\{
h^{(i)}\right\}  $ and $\left\{  W^{(i)}\right\}  $ with $z=\sum_{i=1}%
^{\infty}h^{(i)}W^{(i)}$ and
\[
\Vert z\Vert_{\pi}=\inf\left\{  \sum_{i=1}^{\infty}\Vert h^{(i)}\Vert\Vert
W^{(i)}\Vert;\,z=\sum_{i=1}^{\infty}h^{(i)}W^{(i)}\,\right\}  .
\]
One can endorse the partial order $\geq$ in $\mathfrak{Z}$ by
\[
z\geq0\Leftrightarrow\sum_{i=1}^{\infty}h^{(i)}(s,t)W^{(i)}\geq0,\forall
(s,t)\in\lbrack0,1]^{\times2}.
\]
The strict inequality $z>0$ means that $z$ is an interior point of the cone
$\left\{  z^{\prime};z^{\prime}\geq0\right\}  $.

Any bounded linear functional $\zeta^{\ast}$ on $\mathfrak{Z}$ is the
linearization of bilinear form on $\mathcal{C}$ and $\mathcal{B}_{sa}$ (see
Section 2.2, \cite{Ryan}) :%

\[
\zeta^{\ast}\left(  hW\right)  =\zeta^{\ast}\left(  h\right)  \left(
W\right)  ,
\]
where $\zeta^{\ast}\left(  h\right)  \left(  \cdot\right)  $ and $\zeta^{\ast
}\left(  \cdot\right)  \left(  W\right)  $ is an element of $\mathcal{B}%
_{sa}^{\ast}$ and $\mathcal{C}^{\ast}$, respectively.

Below, $\mathfrak{Z}$ is the one defined as above, and
\begin{align*}
\mathfrak{X}  &  =\Omega:=\{\vec{W}=(W_{1},W_{2});W_{1},W_{2}\in
\mathcal{B}_{sa}\},\\
F(\vec{W})  &  :=\mathrm{tr}\,\rho W_{1}+\mathrm{tr}\,\sigma W_{2},
\end{align*}

\begin{lemma}
\label{lem:dual} Suppose $g_{f}$ is positive, bounded and continuous on
$\left[  0,1\right]  ^{\times2}$. Suppose $(s,t)\rightarrow\eta_{s,t}(W)$ is
$\nu$\thinspace-\thinspace measurable function on $\left[  0,1\right]
^{\times2}$ and $W\rightarrow\eta_{s,t}(W)$ is a linear functional with
\begin{equation}
\left\vert \eta_{s,t}(W)\right\vert \leq\Vert W\Vert,\nu\text{-\thinspace
a.e.,} \label{|eta|=1-2}%
\end{equation}
and
\begin{equation}
\int s\,\eta_{s,t}(W)\mathrm{d}\,\nu=\mathrm{tr}\,\rho W,\,\int t\,\eta
_{s,t}(W)\mathrm{d}\,\nu=\mathrm{tr}\,\sigma W,\,\forall W\in\mathcal{B}_{sa}.
\label{eta-constraint}%
\end{equation}
Then
\begin{equation}
\min_{\eta}\int g_{f}(s,t)\eta_{s,t}(\mathbf{1})\mathrm{d}\,\nu=\sup_{\left(
W_{1},W_{2}\right)  \in\mathcal{W}_{f}}\left(  \mathrm{tr}\,W_{1}%
\rho+\mathrm{tr}\,W_{2}\sigma\right)  . \label{dual-pre-pre}%
\end{equation}

\end{lemma}

\begin{proof}
We apply Proposition$\,$\ref{prop:duality} with
\[
G(\vec{W})\colon=g_{1}W_{1}+g_{2}W_{2}-g_{f}\mathbf{1},
\]
where $g_{1}(s,t)\colon=s$, $g_{2}(s,t)\colon=t$. With $\zeta^{\ast}%
\in\mathfrak{Z}^{\ast}$, $\zeta^{\ast}\geq0$,
\begin{align*}
&  F_{\ast}\left(  \zeta^{\ast}\right)  =\sup_{\vec{W}}\{\mathrm{tr}\,\rho
W_{1}+\mathrm{tr}\,\sigma W_{2}-\zeta^{\ast}(g_{1}W_{1}+g_{2}W_{2}%
-g_{f}\mathbf{1})\}\\
&  =\sup_{\vec{W}}\left\{  (\mathrm{tr}\,\rho W_{1}-\zeta^{\ast}(g_{1}%
)(W_{1}))+(\mathrm{tr}\,\sigma W_{2}-\zeta^{\ast}(g_{2})(W_{2}))+\zeta^{\ast
}(g_{f})(\mathbf{1})\right\} \\
&  =\left\{
\begin{array}
[c]{cc}%
\zeta^{\ast}\left(  g_{f}\right)  (\mathbf{1}), & \text{if }\zeta^{\ast
}\left(  g_{1}\right)  \left(  W\right)  =\mathrm{tr}\,\rho W\text{ and }%
\zeta^{\ast}\left(  g_{2}\right)  \left(  W\right)  =\mathrm{tr}\,\sigma
W,\,\\
\infty, & \text{otherwise.}%
\end{array}
\right.
\end{align*}

Observe $h\rightarrow\zeta^{\ast}(h)(W)$ is a bounded functional on
$\mathcal{C}$. Therefore, by Riesz-Markov representation theorem,
\[
\zeta^{\ast}(h)(W)=\int h(s,t)\mathrm{d}\nu_{W},
\]
where $\nu_{W}$ is a regular measure over the Borel sets of $\left[
0,1\right]  ^{\times2}$. By $\zeta^{\ast}\left(  \chi\left(  B\right)
\right)  \left(  W\right)  =\nu_{W}\left(  B\right)  $, where $\chi\left(
\cdot\right)  $ is the indicator function,
\begin{equation}
|\nu_{W}(B)|\leq\Vert W\Vert\Vert\zeta^{\ast}(\chi(B))(\cdot)\Vert=\Vert
W\Vert\Vert\zeta^{\ast}(\chi(B))(\mathbf{1})\Vert=\Vert W\Vert\nu_{\mathbf{1}%
}(B). \label{|nu|<}%
\end{equation}
Therefore, $\nu_{W}$ is absolutely continuous relative to $\nu_{\mathbf{1}}%
$.Thus $\eta_{s,t}\left(  W\right)  \colon=\frac{\mathrm{d}\nu_{W}}%
{\mathrm{d}\nu_{\mathbf{1}}}$ exists, and
\[
\zeta^{\ast}(h)(W)=\int h(s,t)\eta_{s,t}(W)\mathrm{d}\,\nu_{\mathbf{1}}.
\]
Since $W\rightarrow\zeta^{\ast}(h)(W)$ is linear and positive, so is
$W\rightarrow$ $\eta_{s,t}(W)$, $\nu_{\mathbf{1}}$-a.e. (\ref{|eta|=1-2})
follows from (\ref{|nu|<}). Therefore, rewriting $F_{\ast}$ using $\eta_{s,t}$
and $\nu\colon=\nu_{\mathbf{1}}$, we have the LHS of (\ref{dual-pre-pre}).

Also, $G(\cdot)$ is convex, and $\vec{W}_{0}\colon=(w_{1,0}\mathbf{1}%
,w_{2,0}\mathbf{1})$, where $(w_{1,0},w_{2,0})$ is a relative interior point
of $\mathcal{W}_{f}$, satisfies $G(\vec{W}_{0})<0$. Finally,
\begin{align*}
\eta_{s,t}^{0}(W)  &  :=\left\{
\begin{array}
[c]{cc}%
\mathrm{tr}\,\rho W, & \text{if }(s,t)=(1,0),\\
\mathrm{tr}\,\sigma W, & \text{if }(s,t)=(0,1),\\
0, & \text{otherwise}%
\end{array}
\right.  ,\,\,\\
\nu^{0}(\{(1,0)\})  &  =\nu^{0}(\{(1,0)\}):=1,\,\nu^{0}([0.1]^{\times
2}\backslash\{(0,1),(1,0)\}):=0,\,
\end{align*}
satisfies (\ref{eta-constraint}) and $\int g_{f}(s,t)\eta_{s,t}^{0}%
(\mathbf{1})\mathrm{d}\,\nu^{0}=g_{f}(1,0)+g_{f}(0,1)$ is finite. Thus by
(\ref{weak-dual}), the RHS of (\ref{dual-pre-pre}) is finite. Therefore, we
can apply Proposition\thinspace\ref{prop:duality}, and the assertion is proved.
\end{proof}

\begin{theorem}
Suppose $\mathcal{H}$ a separable Hilbert space, (FC) is satisfied, and
$\hat{f}(0)<\infty$ and $f(0)<\infty$.Then, (i) (\ref{Dmax-dual}) holds if
$W_{1}$ and $W_{2}$ ranges over $\mathcal{B}_{sa}$. (ii) $\inf$ in
(\ref{reverse-test-3}) can be replaced by $\min$. (iii) $\mathrm{D}_{f}^{\max
}$ is lower semicontinuous.
\end{theorem}

\begin{proof}
We use Lemma\thinspace\ref{lem:dual}, and rewrite $\eta_{s,t}$ using $Z(s,t)$.
Then, (i) and (ii) will be simultaneously proved. Since (i) means
$\mathrm{D}_{f}^{\max}$ is the pointwise supremum of linear functionals, (iii)
will follow.

Without loss of generality, one may suppose $f(r)\geq0$, $\forall$ $r\geq0$,
or equivalently, $g_{f}$ is positive. To see this, choose $a$ and $b$ so that
$f_{1}(r):=f(r)-ar-b\geq0$. Then, $g_{f_{1}}(s,t)=g_{f}(s,t)-as-bt\geq0$. If
$(W_{1},W_{2})\in\mathcal{W}_{f_{1}}^{\max}$, $(W_{1}+a,W_{2}+b)\in
\mathcal{W}_{f}^{\max}$, and $\mathrm{D}_{f}^{\max}(\rho\Vert\sigma
)=\mathrm{D}_{f_{1}}^{\max}(\rho\Vert\sigma)+a\mathrm{tr}\,\rho+b\mathrm{tr}%
\,\sigma$. Thus, $\mathrm{D}_{f}^{\max}$ satisfies (i)-(iii) of the present
theorem iff $\mathrm{D}_{f_{1}}^{\max}$ satisfy those.

Since $\eta_{s,t}$ is a bounded linear functional on $\mathcal{B}_{sa}$, there
is $Z(s,t)\in\mathcal{B}_{1,sa}$ with $\mathrm{tr}\,Z(s,t)W=\eta_{s,t}\left(
W\right)  $, for all for any $W$ with finite rank. Then by $\zeta\geq0$ and
(\ref{|eta|=1-2}),
\begin{equation}
Z(s,t)\geq0,\,\,\mathrm{tr}\,Z(s,t)\leq1,\nu\text{-a.e.}. \label{trZ<1}%
\end{equation}
Also,
\begin{equation}
\mathrm{tr}\,Z(s,t)W\leq\eta_{s,t}\left(  W\right)  ,\,W\in\mathcal{B}%
_{sa},W\geq0,\nu\text{-a.e.}. \label{Z<eta}%
\end{equation}

Therefore, since $g_{f}\geq0$ without loss of generality,
\begin{equation}
\int g_{f}(s,t)\eta_{s,t}(\mathbf{1})\mathrm{d}\nu\geq\int g_{f}%
(s,t)\mathrm{tr}\,Z(s,t)\mathrm{d}\nu, \label{zeta>Z}%
\end{equation}
thus replacement of $\eta_{s,t}$ by $W\rightarrow\mathrm{tr}\,Z(s,t)W$ only
improve the value of optimized function.

Next we show that (\ref{eta-constraint}) leads to (\ref{reverse-test-infty}).
Suppose $W\geq0$, and let $\{W^{(k)}\}$ be the sequence of positive finite
rank operators such that $W^{(k)}=\pi_{k}W\pi_{k}$, where $\pi_{k}$ is the
projector onto $k$ -\thinspace dimensional subspace. Then $0\leq W^{(k)}\leq
W$ and as $k\rightarrow\infty$, for , \
\[
s\mathrm{tr}\,Z(s,t)W^{(k)}\nearrow s\mathrm{tr}\,Z(s,t)W\text{, }%
\nu\,\text{-a.e.\thinspace}.
\]
Since the function $(s,t)\rightarrow s\mathrm{tr}\,Z(s,t)W$ is $\nu
$-integrable, by monotone convergence theorem,
\begin{align*}
\mathrm{tr}\,\rho W\underset{(a)}{=}\lim_{k\rightarrow\infty}\mathrm{tr}\,\rho
W^{(k)}  &  =\lim_{k\rightarrow\infty}\int s\eta_{s,t}(W^{(k)})\mathrm{d}\nu\\
&  \underset{(b)}{=}\lim_{k\rightarrow\infty}\int s\mathrm{tr}\,Z(s,t)W^{(k)}%
\mathrm{d}\nu\\
&  =\int\lim_{k\rightarrow\infty}s\mathrm{tr}\,Z(s,t)W^{(k)}\mathrm{d}\nu=\int
s\mathrm{tr}\,Z(s,t)W\mathrm{d}\nu.
\end{align*}
Here, $(a)$ holds since $\rho$ is trace crass, and $(b)$ holds since $W^{(k)}$
is of finite rank. When $W$ is not positive, decomposing it into its positive
and negative part, we obtain the identity. Thus, (\ref{reverse-test-infty}) is satisfied.

Finally, due to (\ref{trZ<1}), $Z(s,t)$ can be normalized to satisfy
(\ref{trZ=1}).
\end{proof}

\section{Discussions}

We had introduced the maximal $f$-\thinspace divergence as the solution to an
optimization problem, reverse test, and shown its closed formula in some
important cases. Next step is to consider asymptotic version of the problem,
in the hope that this close the gap between the maximum and minimum quantum
divergence. The present author's long standing project is to characterize all
the possible quantum $f$-\thinspace divergence, as \cite{Petz} had
characterized all the quantum Fisher information.

\section*{Appendix Matrix analysis}

\begin{proposition}
\label{prop:convex-cfc}(Theorem\thinspace V.2.3 of \cite{Bhatia})Let $f$ be a
continuous function on $[0,\infty)$ . Then, if $f$ is operator convex and
$f(0)\leq0$, for any positive operator $X$ and an operator $C$ such that
$\left\Vert C\right\Vert \leq1$, $f\left(  C^{\dagger}XC\right)  \leq
C^{\dagger}f\left(  X\right)  C$.
\end{proposition}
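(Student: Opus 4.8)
The plan is to prove the equivalence by establishing the two implications separately (reading the positive operator in (ii) as $A$, with $C$ a contraction, $\left\Vert C\right\Vert \leq 1$). The implication (ii) $\Rightarrow$ (i) is the routine direction. Taking $C=0$ forces $f\left( 0\right) \leq 0$ at once. For operator convexity, fix $A_{1},A_{2}\geq 0$ and $0\leq s\leq 1$, set $A:=A_{1}\oplus A_{2}$ acting on $\mathcal{H}\oplus \mathcal{H}$, and take the column
\[
C:=\begin{pmatrix} \sqrt{s}\,\mathbf{1} \\ \sqrt{1-s}\,\mathbf{1} \end{pmatrix},\qquad C^{\dagger }C=\mathbf{1},\ \left\Vert C\right\Vert =1 .
\]
Since $A$ is block diagonal, $f$ acts blockwise, so $C^{\dagger }AC=sA_{1}+\left( 1-s\right) A_{2}$ and $C^{\dagger }f\left( A\right) C=sf\left( A_{1}\right) +\left( 1-s\right) f\left( A_{2}\right)$; substituting into (ii) gives the convexity inequality directly.

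For the substantive direction (i) $\Rightarrow$ (ii) I would argue by unitary dilation. As the dimensions are finite I may assume, after padding with zeros, that $C$ is a square contraction on a single space $\mathcal{H}$. Put $S:=\left( \mathbf{1}-C^{\dagger }C\right) ^{1/2}$ and $T:=\left( \mathbf{1}-CC^{\dagger }\right) ^{1/2}$, and form the two block operators on $\mathcal{H}\oplus \mathcal{H}$,
\[
U:=\begin{pmatrix} C & T \\ S & -C^{\dagger } \end{pmatrix},\qquad V:=\begin{pmatrix} C & -T \\ S & C^{\dagger } \end{pmatrix}.
\]
From the intertwining identities $C^{\dagger }T=SC^{\dagger }$ and $TC=CS$, which follow by functional calculus from $C^{\dagger }\left( CC^{\dagger }\right) =\left( C^{\dagger }C\right) C^{\dagger }$, one checks $U^{\dagger }U=V^{\dagger }V=\mathbf{1}$, so $U$ and $V$ are unitary.

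Setting $B:=A\oplus 0$, a direct block computation (the off-diagonal terms cancelling because of the opposite signs in $U$ and $V$) yields
\[
\tfrac{1}{2}\left( U^{\dagger }BU+V^{\dagger }BV\right) =\begin{pmatrix} C^{\dagger }AC & 0 \\ 0 & TAT \end{pmatrix}.
\]
Applying operator convexity of $f$ to this midpoint, then using $f\left( W^{\dagger }BW\right) =W^{\dagger }f\left( B\right) W$ for the unitaries $W=U,V$ together with $f\left( B\right) =f\left( A\right) \oplus f\left( 0\right) \mathbf{1}$, I read off the $\left( 1,1\right)$-block to obtain
\[
f\left( C^{\dagger }AC\right) \leq C^{\dagger }f\left( A\right) C+f\left( 0\right) \left( \mathbf{1}-C^{\dagger }C\right) .
\]
Because $f\left( 0\right) \leq 0$ and $\mathbf{1}-C^{\dagger }C\geq 0$, the last summand is negative semidefinite and may be discarded, giving exactly (ii).

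I expect the main obstacle to be the dilation step itself: verifying that $U$ and $V$ are unitary (resting on the intertwining identity $C^{\dagger }T=SC^{\dagger }$) and the bookkeeping of the block products. The conceptual crux is that an arbitrary contraction conjugation $A\mapsto C^{\dagger }AC$ is realized as the average of two \emph{unitary} conjugations of $A\oplus 0$; this is precisely what allows operator convexity, a statement about convex combinations, to control a non-unital map. The hypothesis $f\left( 0\right) \leq 0$ is not cosmetic but must be tracked through the $\left( 1,1\right)$-block, where it is exactly what lets the residual term $f\left( 0\right) \left( \mathbf{1}-C^{\dagger }C\right)$ be dropped.
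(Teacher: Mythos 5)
Your proof is correct. The paper itself gives no proof of this proposition, citing it as Theorem\thinspace V.2.3 of Bhatia's \emph{Matrix Analysis}, and your argument is precisely the standard dilation proof of that theorem: the column-isometry trick for (ii) $\Rightarrow$ (i), and for (i) $\Rightarrow$ (ii) the two block unitaries built from the defect operators $S=\left(\mathbf{1}-C^{\dagger}C\right)^{1/2}$ and $T=\left(\mathbf{1}-CC^{\dagger}\right)^{1/2}$, whose averaged conjugation of $A\oplus 0$ produces $C^{\dagger}AC\oplus TAT$, after which one reads off the $(1,1)$ block and discards the term $f\left(0\right)\left(\mathbf{1}-C^{\dagger}C\right)\leq 0$.
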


\begin{proposition}
\label{prop:jensen}((2.43) of \cite{Bhatia-2}) Let $f$ be a operator convex
function defined on $[0,\infty)$. Let $\Lambda^{\dagger}$ be a unital positive
map. Then
\[
f\left(  \Lambda^{\dagger}\left(  A\right)  \right)  \leq\Lambda^{\dagger
}\left(  f\left(  A\right)  \right)
\]
holds for any $A\geq0$. \ 
\end{proposition}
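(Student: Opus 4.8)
The plan is to reduce a general operator convex $f$ to the canonical generators furnished by the L\"owner representation and to verify the inequality on each generator. Write $\Phi:=\Lambda^{\dagger}$ for the given unital positive map, so $\Phi(\mathbf{1}_{A})=\mathbf{1}_{B}$ and $A\geq 0$. First I would normalize: replacing $f$ by $f-f(0)$ shifts both $f(\Phi(A))$ and $\Phi(f(A))$ by the same scalar multiple of the identity (using unitality), so the asserted inequality is unchanged and I may assume $f(0)=0$. Then $f$ satisfies (F) and admits the representation (\ref{lowner-2}), namely $f(y)=ay+by^{2}+\int_{(0,\infty)}\left(\frac{y}{1+t}+\psi_{t}(y)\right)\mathrm{d}\mu(t)$ with $b\geq 0$, $\mu\geq 0$, and $\psi_{t}(y)=-y/(y+t)$ operator convex by Proposition\thinspace\ref{prop:psi-monotone-covex}.

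Since $\Phi$ is bounded and linear it commutes with the finite linear combination and with the integral in (\ref{lowner-2}) (the combined integrand is operator convex, vanishes at $0$, and is $\mu$-integrable). Hence it suffices to prove $g(\Phi(A))\leq\Phi(g(A))$ for each generator $g\in\{\,y,\ y^{2},\ \psi_{t}\,\}$ and then recombine under the representation with the nonnegative weights. The affine generator $g(y)=y$, and likewise each $\frac{y}{1+t}$ piece, gives equality by linearity. For $g(y)=y^{2}$ the claim is Kadison's inequality $\Phi(A)^{2}\leq\Phi(A^{2})$ for unital positive maps and self-adjoint $A$. For $g=\psi_{t}$, writing $\psi_{t}(y)=-\mathbf{1}+t\,(y+t)^{-1}$ and using unitality, the inequality $\psi_{t}(\Phi(A))\leq\Phi(\psi_{t}(A))$ is equivalent to the resolvent inequality $(\Phi(A)+t)^{-1}\leq\Phi\left((A+t)^{-1}\right)$ for the positive invertible operator $A+t$.

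I expect the two nonaffine generators to be the main obstacle, since it is precisely there that positivity of $\Phi$ does the real work; the L\"owner reduction concentrates the theorem into Kadison's inequality and the resolvent inequality rather than trivializing it. For the resolvent step, if one is content to use $2$-positivity as in the Schwarz inequality (\ref{schwartz}), it drops out immediately: applying $\Phi$ entrywise to $\left[\begin{array}{cc} S & \mathbf{1}_{A}\\ \mathbf{1}_{A} & S^{-1}\end{array}\right]\geq 0$ with $S:=A+t$ and taking the Schur complement yields $\Phi(S^{-1})\geq\Phi(S)^{-1}$. For a merely positive (not $2$-positive) unital map the block argument is unavailable, and establishing both Kadison's inequality and the resolvent inequality in that generality is the delicate point; this is exactly the content carried by the cited theorem of Choi, on which the genuinely positive-map case rests. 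Recombining the three generator inequalities under the integral (\ref{lowner-2}) then gives $f(\Phi(A))\leq\Phi(f(A))$, completing the argument.
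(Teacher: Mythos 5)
Your reduction via the L\"owner representation (\ref{lowner-2}) is sound bookkeeping, the normalization $f(0)=0$ by unitality matches the paper's own trick, and the block-matrix/Schur-complement argument correctly proves the resolvent inequality $\Phi(S^{-1})\geq\Phi(S)^{-1}$ for $2$-positive unital maps. But the proposition claims the inequality for \emph{any} unital positive map, and at exactly that point your proof stops: you state that for merely positive maps, Kadison's inequality and the resolvent inequality are ``the content carried by the cited theorem of Choi, on which the genuinely positive-map case rests.'' That is circular --- the cited theorem of Choi \emph{is} the proposition being proved, so the entire nontrivial case (positive but not $2$-positive $\Lambda^{\dagger}$) is deferred to the statement itself. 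As written, your argument establishes the result only for $2$-positive unital maps.

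The missing idea, which is how the paper closes precisely this gap, is a commutativity observation: every operator to which $\Lambda^{\dagger}$ is applied in the inequality --- $A$, $f(A)$, and in your decomposition $A^{2}$ and $(A+t)^{-1}$ --- lies in the commutative algebra generated by $A$ and $\mathbf{1}_{A}$, and a positive map restricted to a commutative algebra is automatically completely positive, in particular $2$-positive. With that single remark, your block-matrix step (and a Kraus/contraction argument for the $y^{2}$ generator, or Proposition \ref{prop:convex-cfc}) goes through verbatim, and your generator-by-generator proof becomes a complete and genuinely different route from the paper's. The paper itself is shorter: it never invokes the L\"owner decomposition, but instead uses the contraction characterization of operator convexity with $f(0)\leq0$ (Proposition \ref{prop:convex-cfc}), removes the restriction $f(0)\leq0$ by the same unitality shift you use, and then applies the commutative-restriction observation. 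Without that observation, the crux of the theorem --- positivity alone sufficing --- is assumed rather than proved.
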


\begin{proposition}
\label{prop:lowner}(Proposition\thinspace8.4 of \cite{HiaiMosonyiPetzBeny})Let
$f$ be a continuous operator convex function on $[0,\infty)$. Then, if
$\hat{f}(0)<\infty$, there is a real number $a$ and a positive Borel measure
$\mu$ such that
\[
f(r)=f(0)+\hat{f}(0)r+\int_{\left(  0,\infty\right)  }\psi_{\lambda
}(r)\mathrm{d}\mu\left(  t\right)  ,\,\,\,\,\psi_{\lambda}(r):=-\frac
{r}{r+\lambda},
\]
and $\int_{\left(  0,\infty\right)  }\frac{\mathrm{d}\mu\left(  \lambda
\right)  }{1+\lambda}<\infty$. Since $\psi_{\lambda}$ is operator monotone
decreasing, this means that $f(r)$ is sum of linear function and operator
monotone decreasing function.
\end{proposition}

\begin{proposition}
\label{prop:f-finite}(Lemma\thinspace5.2 of \cite{HiaiMosonyiPetzBeny}) If $f$
is a complex-valued function on finitely many points $\left\{  r_{i};i\in
I\right\}  \subset\lbrack0,\infty)$, then for any pairwise different positive
numbers $\left\{  \lambda_{i};i\in I\right\}  $ there exist complex numbers
$\left\{  c_{i};i\in I\right\}  $ such that $f(r_{i})=\sum_{i\in I}\frac
{c_{i}}{r_{i}+\lambda_{i}}$ , $i\in I$.
\end{proposition}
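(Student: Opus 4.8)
The plan is to read the asserted identity as a square linear system in the unknowns $\{c_j\}_{j\in I}$ and to reduce the statement to invertibility of its coefficient matrix. Writing $n:=|I|$ and regarding the $x_i$ as distinct, the $n$ equations $f(x_i)=\sum_{j}\frac{c_j}{x_i+t_j}$ have coefficient matrix $M=\left(\frac{1}{x_i+t_j}\right)_{i,j\in I}$, a Cauchy matrix. Every entry is well defined, since $x_i\ge 0$ and $t_j>0$ force $x_i+t_j>0$. Hence it suffices to show $M$ is nonsingular: then the system is solvable (uniquely) for the prescribed right-hand side $(f(x_i))_{i}$, which is exactly the claim.

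Since $M$ is square, nonsingularity is equivalent to injectivity of the associated linear map, and I would prove injectivity by a rational-function argument. Suppose $\sum_{j}\frac{c_j}{x_i+t_j}=0$ for every $i$, and set $R(x):=\sum_{j}\frac{c_j}{x+t_j}=\frac{N(x)}{D(x)}$ with $D(x):=\prod_{k}(x+t_k)$ and $N(x):=\sum_{j}c_j\prod_{k\neq j}(x+t_k)$, so $\deg N\le n-1$. Since $D(x_i)\neq0$, the hypothesis gives $N(x_i)=0$ at all $n$ distinct nodes $x_i$; a polynomial of degree $\le n-1$ with $n$ distinct roots is identically zero, so $N\equiv0$. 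Evaluating at the simple poles $x=-t_j$ annihilates every term but the $j$-th, yielding $N(-t_j)=c_j\prod_{k\neq j}(t_k-t_j)$, and pairwise difference of the $t_k$ makes this product nonzero; hence every $c_j=0$. This is injectivity, so $M$ is invertible.

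This injectivity step is the only real content; the remainder is bookkeeping. The two places needing care are the well-definedness of the entries (guaranteed by $x_i+t_j>0$) and the two distinctness hypotheses — distinctness of the nodes $x_i$, used to force $N\equiv0$, and pairwise difference of the $t_j$, used to recover the $c_j$ from the residues at $-t_j$. Equivalently one could invoke the closed Cauchy-determinant formula $\det M=\frac{\prod_{i<k}(x_k-x_i)(t_k-t_i)}{\prod_{i,j}(x_i+t_j)}$, whose numerator is nonzero under exactly these hypotheses; but the residue argument is self-contained and I would use it.
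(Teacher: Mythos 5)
Your proof is correct. Note, however, that the paper offers no proof of this proposition at all: it is quoted verbatim, with attribution, as Lemma\thinspace5.2 of \cite{HiaiMosonyiPetzBeny}, so there is no internal argument to compare yours against. Your route—reducing the interpolation claim to nonsingularity of the Cauchy matrix $\left(\frac{1}{x_i+t_j}\right)_{i,j}$ and proving injectivity by clearing denominators, forcing the numerator polynomial of degree at most $n-1$ to vanish at the $n$ distinct nodes $x_i$, and then reading off each $c_j$ from the value at $x=-t_j$—is the standard one, and it is essentially the argument behind the cited lemma (equivalently, the nonvanishing of the Cauchy determinant, as you note). Your version is self-contained and uses the two distinctness hypotheses exactly where they are needed: distinctness of the $x_i$ to kill the numerator, and pairwise distinctness of the $t_j$ to make the products $\prod_{k\neq j}(t_k-t_j)$ nonzero. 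No gaps.
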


\begin{proposition}
\label{prop:block-positive}(Exercise\thinspace1.3.5 of \cite{Bhatia-2}) Let
$X$, $Y$ be a positive definite matrices. Then,
\begin{equation}
\left[
\begin{array}
[c]{cc}%
X & C\\
C^{\dagger} & Y
\end{array}
\right]  \geq0 \label{block}%
\end{equation}
implies
\begin{equation}
X\geq CY^{-1}C^{\dagger},\,\,\,Y\geq C^{\dagger}X^{-1}C. \label{X>Y}%
\end{equation}

\end{proposition}


\begin{thebibliography}{99}                                                                                               %


\bibitem {AmariNagaoka}Amari, S., Nagaoka,H.: Methods of Information Geometry.
AMS (2001)

\bibitem {Bhatia}Bhatia, R.: Matrix Analysis. Springer, Berlin (1996)

\bibitem {Bhatia-2}Bhatia,R.: Positive Definite Matrices. Princeton (2007)

\bibitem {Belavkin}Belavkin,V. P.:On Entangled Quantum Capacity. In: Quantum
Communication, Computing, and Measurement 3.pp.325-333. Kluwer, Boston (2001)

\bibitem {Chefles}Chefles, A.:Deterministic quantum state transformations.
Phys. Lett A 270, 14 (2000)

\bibitem {Ebadian}Ebadian, A., Nikoufar, I., and Gordjic,M.: Perspectives of
matrix convex functions. Proc. Natl Acad. Sci. USA, 108(18), 7313--7314 (2011)

\bibitem {Effros}Effros, E., and Hansen, F.,: Non-commutative perspectives,
Ann. Funct. Anal. Volume 5, Number 2, 74-79 (2014)

\bibitem {Luenberger}Luenberger, D. G.:Optimization by vector space methods.
Wiley, New York (1969)

\bibitem {HammersleyBelavkin}Hammersley,S. J.,Belavkin, V. P.:Information
Divergence for Quantum Channels, Infinite Dimensional Analysis. In: Quantum
Information and Computing, Quantum Probability and White Noise Analysis,VXIX,
pp.149-166, World Scientific, Singapore(2006)

\bibitem {Hayashi}Hayashi,M.:Characterization of Several Kinds of Quantum
Analogues of Relative Entropy. Quantum Information and Computation, Vol. 6,
583-596 (2006)

\bibitem {HiaiMosonyi}Hiai,F., Petz,D.: Different quantum f-divergences and
the reversibility of quantum operations, arXiv:math-ph/1604.03089 (2006)

\bibitem {HiaiMosonyiPetzBeny}Hiai, F., Mosonyi, M., Petz D., and Beny,
C.:Quantum f- divergences and error corrections. Rev. Math. Phys. 23, 691--747 (2011)

\bibitem {HiaiPetz}Hiai,F., Petz,D.: The proper formula for relative entropy
and its asymptotics in quantum probability. Comm. Math. Phys. 143, 99-114 (1991)

\bibitem {Holevo}Holevo, A.S.:Probabilistic and Statistical Aspects of Quantum
Theory, North-Holland, Amsterdam, (1982)(in Russian, 1980)

\bibitem {Matsumoto:dr}Matsumoto, K.: A Geometrical Approach to Quantum
Estimation Theory, doctoral dissertation, University of Tokyo (1998)

\bibitem {Matsumoto:05}Matsumoto, K.: Reverse estimation theory,
Complementality between RLD and SLD, and monotone distances.
arXiv:quant-ph/0511170 (2005)

\bibitem {Matsumoto}Matsumoto, K.: Reverse test and quantum analogue of
classical Fidelity and generalized Fidelity, arXiv:quant-ph/1006.0302 (2010)

\bibitem {Matsumoto:14}Matsumoto, K.: On maximization of measured
$f$-divergence between a given pair of quantum states, arXiv:1412.3676 (2014)

\bibitem {Jencova:03}Jencova, A.:Affine connections, duality and divergences
for a von Neumann algebra. arXiv:math-ph/0311004 (2003)

\bibitem {Jencova}Jencova, A.:Reversibility conditions for quantum operations.
Rev. Math. Phys. 24 1250016(2012)

\bibitem {Parthasarathy}Parthasarathy, K.:Probability and Measures on Metric
Spaces. Academic Press(1967)

\bibitem {Petz}Petz,D.:Monotone Metrics on Matrix Spaces:Linear Algebra and
its Applications, 224, 81-96 (1996)

\bibitem {Rockafellar}Rockafellar,R.T.:Convex Analysis. Princeton(1970)

\bibitem {Ryan}Ryan,R.A.:Introduction to tensor products of Banach spaces,
Springer, Berlin(2002)

\bibitem {Strasser}Strasser, H.:Mathematical Theory of
Statistics---Statistical Experiments and Asymptotic Decision Theory. Walter de
Gruyter, Berlin(1985).

\bibitem {Uhlmann}Uhlmann, A.:Eine Bemerkung uber vollstandig positive
Abbildungen von Dichteopera-toren. Wiss. Z. KMU Leipzig, Math.-Naturwiss. R.
34(6), 580-582 (1985).
\end{thebibliography}
\end{document}